\newtheorem{theorem}{Theorem}
\newtheorem{lemma}{Lemma}
\newtheorem{corollary}{Corollary}
\newtheorem{operation}{Operation}
\newenvironment{proof}{{\sc Proof. }}{\hfill$\Box$\vspace{0.2in}}
\title{Approximation algorithms for the maximum weight internal spanning tree problem%
\footnote{An extended abstract appears in the {\em Proceedings of COCOON 2017}.}}
\author{Zhi-Zhong Chen\thanks{Division of Information System Design, Tokyo Denki University.
	Hatoyama, Saitama 350-0394, Japan.
	Email: {\tt zzchen@mail.dendai.ac.jp}}
\and
Guohui Lin\thanks{Department of Computing Science, University of Alberta.
	Edmonton, Alberta T6G 2E8, Canada.
	Email: {\tt guohui@ualberta.ca}}
	\thanks{Correspondence author.}
\and
Lusheng Wang\thanks{Department of Computer Science, City University of Hong Kong.
	Tat Chee Avenue, Kowloon, Hong Kong, China.
	Email: {\tt cswangl@cityu.edu.hk}}
\and
Yong Chen\thanks{Institute of Operational Research and Cybernetics, Hangzhou Dianzi University.
	Hangzhou, Zhejiang 310018, China.
	Email: {\tt chenyong@hdu.edu.cn}}
	$^{\ddagger}$
\and
Dan Wang$^\P$}%
\date{\today}
\begin{document}
\maketitle
\begin{abstract}
Given a vertex-weighted connected graph $G = (V, E)$, the maximum weight internal spanning tree (MwIST for short) problem asks for
a spanning tree $T$ of $G$ such that the total weight of the internal vertices in $T$ is maximized.
The unweighted variant, denoted as MIST, is NP-hard and APX-hard, and the currently best approximation algorithm has a proven performance ratio $13/17$.
The currently best approximation algorithm for MwIST only has a performance ratio $1/3 - \epsilon$, for any $\epsilon > 0$.
In this paper, we present a simple algorithm based on a novel relationship between MwIST and the maximum weight matching,
and show that it achieves a better approximation ratio of $1/2$.
When restricted to claw-free graphs, a special case been previously studied, we design a $7/12$-approximation algorithm.

\paragraph{Keywords:}
Maximum weight internal spanning tree; maximum weight matching; approximation algorithm; performance analysis
\end{abstract}

\newpage
\section{Introduction}
In the {\em maximum weight internal spanning tree} (MwIST for short) problem,
we are given a vertex-weighted connected graph $G = (V, E)$, where each vertex $v$ of $V$ has a nonnegative weight $w(v)$,
with the objective to compute a spanning tree $T$ of $G$ such that the total weight of the internal vertices in $T$, denoted as $w(T)$, is maximized.
MwIST has applications in the network design for cost-efficient communication~\cite{SW08} and water supply~\cite{BFG13}.

When the vertex weights are uniform, or simply vertex-unweighted,
the problem is referred to as the {\em maximum internal spanning tree} (MIST for short) problem.
MIST is clearly NP-hard because it includes the NP-hard Hamiltonian-path~\cite{GJ79} problem as a special case.
Furthermore, MIST has been proven APX-hard~\cite{LZ14}, suggesting that it does not admit a polynomial-time approximation scheme (PTAS).
In the literature,
much research is done on designing (polynomial-time, if not specified) approximation algorithms for MIST
to achieve the worst-case performance ratio as close to $1$ as possible.

The probably first approximation for MIST is a local search algorithm, which achieves a ratio of $1/2$ and is due to Prieto and Sliper~\cite{PS03}.
Salamon and Wiener~\cite{SW08} later modified slightly Prieto and Sliper's algorithm to make it run faster (in linear-time) while achieving the same ratio of $1/2$.
Besides, two special cases of MIST were considered by Salamon and Wiener~\cite{SW08}:
when restricted to claw-free graphs, they designed a $2/3$-approximation algorithm;
when restricted to cubic graphs, they designed a $5/6$-approximation algorithm.
Later, Salamon~\cite{Sal09} proved that the $1/2$-approximation algorithm in \cite{SW08} actually achieves a performance ratio of
$3/(r+1)$ for the MIST problem on $r$-regular graphs ($r \ge 3$).
Based on local optimization, Salamon~\cite{Sal09b} presented an $O(n^4)$-time $4/7$-approximation algorithm for MIST restricted to graphs without leaves.
The algorithm was subsequently simplified and re-analyzed by Knauer and Spoerhase~\cite{KS09} to run faster (in cubic time),
and it becomes the first improved $3/5$-approximation for the general MIST.
Via a deeper local search strategy than those in \cite{KS09} and \cite{Sal09b},
Li {\it et al.}~\cite{LCW14} presented a further improved approximation algorithm for MIST with ratio $2/3$.
At the same time, Li and Zhu~\cite{LZ14} presented another $2/3$-approximation algorithm for MIST.

Unlike the other previously known approximation algorithms for MIST,
the $2/3$-approximation by Li and Zhu~\cite{LZ14} is based on a simple but crucial observation that
the maximum number of internal vertices in a spanning tree of a graph $G$ can be upper bounded by
the maximum number of edges in a triangle-free $2$-matching (a.k.a. path-cycle cover) of $G$.
The time complexity of this approximation algorithm is dominated by computing the maximum triangle-free $2$-matching,
$O(n m^{1.5} \log n)$, where $n$ and $m$ are the numbers of vertices and edges in $G$, respectively.
Li and Zhu~\cite{LZ14b} claimed that they are able to further improve their design to achieve a $3/4$-approximation algorithm for MIST, of the same time complexity.
Recently, Chen {\it et al.}~\cite{CHW16} gave another $3/4$-approximation algorithm for MIST,
which is simpler than the one in \cite{LZ14b};
and they showed that by applying three more new ideas,
the algorithm can be refined into a $13/17$-approximation algorithm for MIST of the same time complexity.
This is currently the best approximation algorithm for MIST.

The parameterized MIST by the number of internal vertices $k$, and its special cases and variants,
have also been extensively studied in the literature~\cite{PS03, Pri05, PS05, CFG10, FLG12, BFG13, FGS13, LWC15, LJF16}.
The best known kernel for the general problem has a size $2k$, which leads to the fastest known algorithm with running time $O(4^k n^{O(1)})$~\cite{LWC15}.

For the vertex-weighted version, MwIST, Salamon~\cite{Sal09b} designed the first $O(n^4)$-time $1/(2 \Delta - 3)$-approximation algorithm,
based on local search, where $\Delta$ is the maximum degree of a vertex in the input graph.
For MwIST on claw-free graphs without leaves, Salamon~\cite{Sal09b} also designed an $O(n^4)$-time $1/2$-approximation algorithm.
Subsequently, Knauer and Spoerhase~\cite{KS09} proposed the first constant-ratio $1/(3 + \epsilon)$-approximation algorithm for the general MwIST,
for any constant $\epsilon > 0$.
The algorithm is based on a new pseudo-polynomial time local search algorithm,
that starts with a depth-first-search tree and applies six rules to reach a local optimum.
It yields a $1/3$-approximation for MwIST and then is extended to a polynomial time $1/(3 + \epsilon)$-approximation scheme.
The authors also showed that the ratio of $1/3$ is asymptotically tight.

In this paper, we deal with the MwIST problem.
We first prove a novel relationship between the total weight of the internal vertices in a spanning tree of the given vertex-weighted graph
and the maximum weight matching of an edge-weighted graph, that is constructed out of the given vertex-weighted graph.
Based on this relationship, we present a simple $1/2$-approximation algorithm for MwIST;
this ratio $1/2$ significantly improves upon the previous known ratio of $1/3$.
When restricted to claw-free graphs, a special case previously studied in \cite{SW08, Sal09b}, we design a $7/12$-approximation algorithm,
improving the previous best ratio of $1/2$.

\section{The $1/2$-approximation algorithm}
Recall that in the MwIST problem, we are given a connected graph $G = (V, E)$, where each vertex $v$ of $V$ has a nonnegative weight $w(v)$,
with the objective to compute a spanning tree $T$ of $G$ such that the total weight of the internal vertices in $T$, denoted as $w(T)$, is maximized.
We note that for such an objective function, we may assume without loss of generality that every leaf in the given graph $G$ has weight $0$.

We construct an edge-weighted graph based on $G = (V, E)$.
In fact, the structure of the new graph is identical to that of $G$:
the vertex set is still $V$, but instead the vertices have no weights;
the edge set is still $E$, where the weight of each edge $e = \{u, v\}$ is $w(e) = w(u) + w(v)$,
i.e., the weight of an edge is the total weight of its two ending vertices in the original graph.
Since there is no ambiguity when we discuss the edge weights or the vertex weights, the new edge-weighted graph is still referred to as $G$.
The weight of an edge subset refers to the total weight of the edges therein;
while the weight of an acyclic subgraph refers to the total weight of the internal (and those surely will become internal) vertices therein.

Let $M^*$ denote the maximum weight matching of (the edge-weighted graph) $G$,
which can be computed in $O(n \min\{m \log n, n^2\})$-time, where $n = |V|$ and $m = |E|$.

\begin{lemma}
\label{lemma01}
Given a spanning tree $T$ of $G$, we can construct a matching $M$ of $G$ such that $w(T) \le w(M)$.
\end{lemma}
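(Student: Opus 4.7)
Let $I_T$ denote the set of internal vertices of $T$, so that by definition $w(T)=\sum_{v\in I_T}w(v)$. The plan is to reduce the inequality to a purely combinatorial statement. Because every leaf of $G$ has weight $0$ (the WLOG assumption recalled just above) and each edge weight equals $w(\{u,v\})=w(u)+w(v)$, any matching $M\subseteq E(G)$ satisfies $w(M)=\sum_{v\in V(M)}w(v)$. Since all vertex weights are nonnegative, it therefore suffices to exhibit a matching $M\subseteq E(T)$ (which is in particular a matching of $G$) whose vertex set $V(M)$ contains $I_T$; the inequality $w(T)\le w(M)$ is then immediate.

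I would establish the following structural claim by strong induction on $|V(T)|$: every tree $T$ admits a matching $M\subseteq E(T)$ saturating $I_T$. The bases $|V(T)|\le 2$ are trivial since then $I_T=\emptyset$. For the inductive step, consider the \emph{stem} $T_I$, the subgraph of $T$ induced by $I_T$. Removing leaves from a tree preserves both connectivity and acyclicity, so $T_I$ is itself a tree; and since any tree on at least two vertices has at least two leaves, $|V(T_I)|\le|V(T)|-2$, so the recursion is well-founded. The inductive hypothesis applied to $T_I$ produces a matching $M_1\subseteq E(T_I)\subseteq E(T)$ saturating $I_{T_I}$, the internal-vertex set of $T_I$.

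The decisive observation is that any $v\in I_T$ not covered by $M_1$ must have a leaf neighbor in $T$: otherwise all $T$-neighbors of $v$ lie in $I_T$, so $\deg_{T_I}(v)=\deg_T(v)\ge 2$, placing $v$ in $I_{T_I}\subseteq V(M_1)$, a contradiction. I would extend $M_1$ by choosing, for each $v\in I_T\setminus V(M_1)$, an arbitrary leaf neighbor $\ell_v$ of $v$ in $T$, and setting $M:=M_1\cup\{\{v,\ell_v\}:v\in I_T\setminus V(M_1)\}$. Since every leaf of $T$ has a unique (necessarily internal) neighbor, distinct $v$'s receive distinct $\ell_v$'s; and these leaf endpoints are disjoint from $V(M_1)\subseteq I_T$, so $M$ is genuinely a matching. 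By construction $I_T\subseteq V(M)$, which is what we wanted.

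The main obstacle is choosing the right inductive object --- the stem $T_I$ rather than $T$ itself --- and recognizing the dichotomy that an uncovered $v\in I_T$ is forced to possess a leaf neighbor in $T$. Once this structural insight is in place, the rest of the argument reduces to disjointness bookkeeping for the matching, and the weighted inequality follows from the one-line vertex-sum identity $w(M)=\sum_{v\in V(M)}w(v)$ noted at the start.
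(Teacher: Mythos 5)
Your proof is correct, and it takes a genuinely different route from the paper's. The paper constructs $M$ by a two-phase greedy on $T$ itself: root $T$ at an internal vertex, repeatedly add the unmarked edge joining two internal vertices that lies closest to the root (marking all incident edges after each addition), and only then sweep up the remaining unmarked edges, each of which is by then internal--leaf. The crux of that argument --- that after phase one every unmatched internal vertex still has an unmarked edge to a leaf --- depends on the root-distance ordering; an arbitrary greedy over internal--internal edges fails (on the path $\ell_1\!-\!a\!-\!b\!-\!c\!-\!d\!-\!e\!-\!\ell_2$, taking $\{a,b\}$ and then $\{d,e\}$ strands $c$, which has no leaf neighbor). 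You sidestep this by inducting on the stem $T_I$: the dichotomy ``uncovered $\Rightarrow$ has a leaf neighbor'' becomes a one-line structural consequence (if $v\in I_T$ has only internal neighbors then $\deg_{T_I}(v)=\deg_T(v)\ge 2$, so $v\in I_{T_I}\subseteq V(M_1)$), rather than a scheduling argument. You also make explicit the clean combinatorial kernel --- every tree has a matching within its own edge set saturating its internal vertices --- which the paper leaves somewhat implicit inside the marking procedure. Both proofs are sound and end up matching internal--internal pairs first and mopping up with leaf edges second; your inductive version is shorter, avoids the ordering subtlety, and isolates the structural fact more cleanly, while the paper's version is more directly an algorithm one could run.
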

\begin{proof}
We construct $M$ iteratively.
Firstly, we root the tree $T$ at an internal vertex $r$, and all the edges of $T$ are {\em unmarked};
then in every iteration we include into $M$ an unmarked edge $e = \{u, v\}$ of $T$ such that 1) both $u$ and $v$ are internal and
2) $e$ is the closest to the root $r$ measured by the number of edges on the path from $r$ to $e$,
followed by marking all the edges incident at $u$ or $v$.
This way, the total weight of the two internal vertices $u$ and $v$ in the tree $T$ is transferred to $M$ by adding the edge $e$ to $M$.
At the time this iterative procedure stops,
there is no unmarked edge of $T$ connecting two internal vertices,
and thus every internal vertex whose weight has not been transferred to $M$ must be adjacent to at least a leaf each via an unmarked edge.

Next, we iteratively include into $M$ a remaining unmarked edge $e = \{u, v\}$ of $T$,
followed by marking all the edges incident at $u$ or $v$.
This way, the total weight of $u$ and $v$, which is greater than or equal to the weight of the internal vertex between $u$ and $v$,
is transferred to $M$ by adding the edge $e$ to $M$.
At the end of this procedure, $T$ contains no more unmarked edges.
Since leaves in the tree $T$ count nothing towards $w(T)$, we conclude that $w(T) \le w(M)$.
This proves the lemma.
\end{proof}

The following corollary directly follows from Lemma~\ref{lemma01}, stating an upper bound on the total weight of an optimal solution to the MwIST problem.

\begin{corollary}
\label{coro01}
Let $T^*$ denote an optimal (maximum weight internal) spanning tree of $G$.
Then, $w(T^*) \le w(M^*)$.
\end{corollary}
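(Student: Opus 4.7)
The plan is straightforward since the corollary is an essentially immediate consequence of Lemma~\ref{lemma01}. I would simply instantiate Lemma~\ref{lemma01} with the specific spanning tree $T = T^*$, producing a matching $M$ of $G$ that satisfies $w(T^*) \le w(M)$.

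Next, I would invoke the defining property of $M^*$: since $M^*$ is, by definition, a maximum weight matching of the edge-weighted graph $G$, every matching $M$ of $G$ satisfies $w(M) \le w(M^*)$. In particular, this applies to the matching constructed in the previous step. Chaining the two inequalities yields $w(T^*) \le w(M) \le w(M^*)$, which is the claim.

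Since the entire argument is a one-line application of the preceding lemma followed by the optimality of $M^*$, there is no real obstacle; the only thing to take care of is to keep the weight conventions consistent, namely that $w(T^*)$ refers to the total vertex weight of internal vertices of $T^*$ (in the original vertex-weighted sense) while $w(M)$ and $w(M^*)$ refer to the edge-weight sums under the convention $w(\{u,v\}) = w(u) + w(v)$ introduced just before Lemma~\ref{lemma01}. Making this explicit ensures that the inequality transported through Lemma~\ref{lemma01} is meaningful and that the bound is exactly what is needed to drive the subsequent $1/2$-approximation analysis.
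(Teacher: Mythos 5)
Your proposal is correct and matches the paper's intent exactly: the paper states that the corollary ``directly follows from Lemma~\ref{lemma01},'' and your argument---instantiating Lemma~\ref{lemma01} with $T = T^*$ to obtain a matching $M$ with $w(T^*) \le w(M)$, then invoking the maximality of $M^*$ to get $w(M) \le w(M^*)$---is precisely that direct deduction. Your additional remark about keeping the two weight conventions (internal-vertex weight for trees versus edge-weight sum for matchings) consistent is a helpful clarification but does not change the substance.
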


We next start with $M^*$ to construct a spanning tree $T$.
Let the edges of $M^*$ be $e_1, e_2, \ldots, e_k$;
let $e_j = \{a_j, b_j\}$, such that $w(a_j) \ge w(b_j)$, for all $j = 1, 2, \ldots, k$.
Note that there could be vertices of degree $0$ in the spanning subgraph $G[V, M^*]$ with the edge set $M^*$, and there could be edges of weight $0$ in $M^*$;
let $X$ denote the set of such degree-$0$ vertices and the end-vertices of such weight-$0$ edges.
Essentially we do not worry about the degree of any vertex of $X$ in our final tree $T$, since their weights (if any) are not counted towards $w(M^*)$.
This way, we assume without loss of generality that $w(a_j) > 0$ for each edge $e_j$ of $M^*$, and consequently the degree of $a_j$ is $d_G(a_j) \ge 2$,
that is, $a_j$ is adjacent to at least one other vertex than $b_j$ in the graph $G$.
Let $A = \{a_j \mid j = 1, 2, \ldots, k\}$, and $B = \{b_j \mid j = 1, 2, \ldots, k\}$;
note that $V = A \cup B \cup X$.

Let $E^{aa} = E(A, A \cup X)$, i.e., the set of edges each connecting a vertex of $A$ and a vertex of $A \cup X$,
and $E^{ab} = E(A, B)$, i.e., the set of edges each connecting a vertex of $A$ and a vertex of $B$.
Our construction algorithm first computes a maximal acyclic subgraph of $G$, denoted as $H_0$, by adding a subset of edges of $E^{aa}$ to $M^*$.
This subset of edges is a maximum weight spanning forest on $A \cup X$, and it can be computed in $O(|E^{aa}| \log n)$-time via a linear scan.
In the achieved subgraph $H_0$, if one connected component $C$ contains more than one edge,
then the vertex $a_j$ of each edge $e_j = \{a_j, b_j\}$ in $C$ has degree at least $2$, i.e. is internal.
Therefore, the total weight of the internal vertices in the component $C$ is at least half of $w(C \cap M^*)$,
and $C$ is called {\em settled} and left alone by the algorithm.

Our algorithm next considers an arbitrary edge of $M^*$ that is not yet in any settled component, say $e_j = \{a_j, b_j\}$.
In other words, the edge $e_j$ is an {\em isolated} component in the subgraph $H_0$.
This implies that the vertex $a_j$ is not incident to any edge of $E^{aa}$,
and thus it has to be adjacent to some vertex in $B - \{b_j\}$.
If $a_j$ is adjacent to some vertex $b_i$ in a settled component, then this edge $(a_j, b_i)$ is added to the subgraph $H_0$
(the edge $e_j$ is said {\em merged} into a settled component) and the iteration ends.
The updated component remains settled, as $w(a_j) \ge w(e_j) /2$ is saved towards the weight of the final tree $T$.

In the other case, the vertex $a_j$ is adjacent to a vertex $b_i$, such that the edge $e_i = \{a_i, b_i\}$ is also an isolated component in the current subgraph.
After adding the edge $(a_j, b_i)$ to the subgraph, the algorithm works with the vertex $a_i$ exactly the same as with $a_j$ at the beginning.
That is, if $a_i$ is adjacent to some vertex $b_\ell$ in a settled component, then this edge $(a_i, b_\ell)$ is added to the subgraph
(the component that $a_i$ belongs to is {\em merged} into a settled component) and the iteration ends;
if $a_i$ is adjacent to a vertex $b_\ell$, such that the edge $e_\ell = \{a_\ell, b_\ell\}$ is also an isolated component in the current subgraph,
then the edge $(a_i, b_\ell)$ is added to the subgraph, the algorithm works with the vertex $a_\ell$ exactly the same as with $a_j$ at the beginning;
in the last case, $a_i$ is adjacent to a vertex $b_\ell$ inside the current component that $a_i$ belongs to,
then the edge $(a_i, b_\ell)$ is added to the current component to create a cycle,
subsequently the lightest edge of $M^*$ in the cycle is removed,
the iteration ends,
and the current component becomes settled.
We note that in the above last case, the formed cycle in the current component contains at least $2$ edges of $M^*$;
breaking the cycle by removing the lightest edge ensures that at least half of the total weight of the edges of $M^*$ in this cycle (and thus in this component)
is saved towards the weight of the final tree $T$.
Therefore, when the iteration ends, the resulting component is settled.

When the second step of the algorithm terminates, there is no isolated edge of $M^*$ in the current subgraph, denoted as $H_1$,
and each component is acyclic and settled.
In the last step, the algorithm connects the components of $H_1$ into a spanning tree using any possible edges of $E$.
We denote the entire algorithm as {\sc Approx}.

\begin{lemma}
\label{lemma02}
At the end of the second step of the algorithm {\sc Approx}, every component $C$ of the achieved subgraph $H_1$ is acyclic and settled
(i.e., $w(C) \ge w(C \cap M^*) /2$).
\end{lemma}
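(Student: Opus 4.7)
The plan is to fix an arbitrary component $C$ of $H_1$ and reduce the lemma to one clean invariant: for every matching edge $e_j = \{a_j, b_j\} \in C \cap M^*$, $\deg_C(a_j) \ge 2$, i.e.\ $a_j$ is internal in $C$. Once this invariant is established, the weight estimate follows in one line from the convention $w(a_j) \ge w(b_j)$:
\[
w(C) \;\ge\; \sum_{e_j \in C \cap M^*} w(a_j) \;\ge\; \frac{1}{2} \sum_{e_j \in C \cap M^*} \bigl(w(a_j) + w(b_j)\bigr) \;=\; \frac{1}{2}\, w(C \cap M^*).
\]
So the lemma splits naturally into showing (i) $C$ is acyclic and (ii) the degree invariant.

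For acyclicity I would case-split on how $C$ was formed. If $C$ was already settled at the end of step~1, it is a subtree of the maximum weight spanning forest on $A \cup X$ added to $M^*$, hence a tree. If $C$ was produced or extended in step~2, I would trace the three rules: a Case~B expansion attaches a previously isolated matching edge to the growing chain through a single bridging edge, gluing two vertex-disjoint trees into a tree; a terminating Case~A merge similarly glues the growing chain to a previously settled tree via one bridging edge; only Case~C closes a cycle, and the algorithm then deletes one $M^*$ edge on that cycle, restoring acyclicity.

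For the degree invariant I would again split on the origin of $e_j$. In a step~1 settled component, $a_j$ is incident both to $e_j$ and to at least one $E^{aa}$ edge added by the spanning-forest routine, so $\deg_C(a_j) \ge 2$. In a step~2 chain, at the moment $e_j$ was absorbed the construction either started from $a_j$ (and later left via a bridging edge out of $a_j$) or first entered at $b_j$ and then departed from $a_j$ via a bridging edge; in either case $a_j$ carries both $e_j$ and a bridging edge. The one place where this could break is Case~C, where an $M^*$ edge $e_p$ is deleted while closing the cycle: that deletion can drop $\deg_C(a_p)$ to $1$, but simultaneously $e_p$ leaves $C \cap M^*$, so the invariant is no longer demanded at $a_p$. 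For every other $e_j \neq e_p$ on the cycle, both $e_j$ and the bridging edge incident to $a_j$ are untouched, and the invariant survives. The main obstacle, and the part that requires careful bookkeeping, is precisely this Case~C step: pinning down, for each $a_j$ that remains in $C \cap M^*$, the specific bridging edge that witnesses $\deg_C(a_j) \ge 2$ after the cycle-breaking deletion, and checking that no such witness is accidentally the removed edge.
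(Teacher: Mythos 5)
Your degree invariant $\deg_C(a_j)\ge 2$ for every $e_j\in C\cap M^*$ is the right lever for step-1 components and for step-2 components that end by merging into a settled one, and for those cases your argument matches the paper's. But the invariant genuinely fails for Case~C, and the proposed repair rests on a misreading of $C\cap M^*$. The paper defines $C\cap M^*$ as the set of $M^*$ edges \emph{with both endpoints in $C$}, not as the set of $M^*$ edges that remain in the edge set of $C$. After the algorithm deletes the lightest $M^*$ edge $e_p$ from the cycle, both $a_p$ and $b_p$ are still vertices of $C$, so $e_p$ is still in $C\cap M^*$ — it does not ``leave.'' Yet $a_p$ now has degree $1$, so the invariant is false for $e_p$. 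If you instead redefine $C\cap M^*$ your way to exclude $e_p$, the lemma no longer feeds into Theorem~1: the theorem's proof sums $w(C\cap M^*)$ over all components and needs this sum to equal $w(M^*)$, which only holds under the both-endpoints definition.

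Moreover, even granting the invariant for the surviving edges, your one-line estimate is too weak for Case~C: you would get $w(C)\ge\sum_{e_j\ne e_p}w(a_j)\ge\tfrac12\sum_{e_j\ne e_p}w(e_j)=\tfrac12\bigl(w(C\cap M^*)-w(e_p)\bigr)$, which falls short of $\tfrac12\,w(C\cap M^*)$ whenever $w(e_p)>0$. The paper's Case~C argument is structurally different: on the path obtained by deleting $e_p$ from the cycle, \emph{both} endpoints $a_j$ and $b_j$ of every surviving $M^*$ edge on the cycle are internal (they retain their two cycle edges), so $w(C)\ge w(C\cap M^*)-w(e_p)$, and since $e_p$ is the lightest of at least two $M^*$ edges on the cycle, $w(e_p)\le\tfrac12\,w(C\cap M^*)$. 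Your ``careful bookkeeping'' plan of finding a bridging witness for each $a_j$ stays inside the weaker head-only bound and cannot close this gap; Case~C needs the stronger both-endpoints accounting.
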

\begin{proof}
Let $C$ denote a component;
$C \cap M^*$ is the subset of $M^*$, each edge of which has both end-vertices in $C$.

If $C$ is obtained at the end of the first step, then $C$ is acyclic and for every edge $e_j \in C \cap M^*$,
the vertex $a_j$ has degree at least $2$, and thus $w(C) \ge w(C \cap M^*) /2$.

If a subgraph of $C$ is obtained at the end of the first step but $C$ is finalized in the second step,
then $C$ is also acyclic and for every edge $e_j \in C \cap M^*$,
the vertex $a_j$ has degree at least $2$, and thus $w(C) \ge w(C \cap M^*) /2$.

If $C$ is newly formed and finalized in the second step,
then at the time $C$ was formed, there was a cycle containing at least $2$ edges of $M^*$ of which the lightest one is removed to ensure the acyclicity,
and thus the total weight of the internal vertices on this path is at least half of the total weight of the edges of $M^*$ on this cycle.
Also, the vertex $a_j$ of every edge $e_j$ not on the cycle has degree at least $2$.
Thus, $w(C) \ge w(C \cap M^*) /2$.
\end{proof}

\begin{theorem}
\label{thm01}
The algorithm {\sc Approx} is a $1/2$-approximation for the MwIST problem.
\end{theorem}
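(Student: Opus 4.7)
The plan is to stitch together Corollary~\ref{coro01} and Lemma~\ref{lemma02} with one short bookkeeping step about the final phase of the algorithm {\sc Approx}. Let $T$ be the spanning tree returned by the algorithm, and let $\{C_i\}$ be the components of $H_1$ at the end of the second step. Summing the inequality of Lemma~\ref{lemma02} over all components gives $\sum_i w(C_i) \ge \frac{1}{2} \sum_i w(C_i \cap M^*)$. I would then argue that $\sum_i w(C_i \cap M^*) = w(M^*)$; that is, every edge of $M^*$ contributes to exactly one of the sums on the right. This requires interpreting $C_i \cap M^*$ as the set of $M^*$-edges with both endpoints in $C_i$, rather than as $M^*$-edges still surviving in $H_1$, so that even the lightest $M^*$-edge removed to break a cycle during the second step is still counted in its component's bucket. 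Removing a cycle edge never disconnects a component, so the two endpoints of every $M^*$-edge end up in the same component of $H_1$.

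Next, I would relate $w(T)$ to $\sum_i w(C_i)$ through a simple monotonicity argument. The third step of {\sc Approx} merely adds edges to $H_1$ to connect its components into a spanning tree; it never removes any. Hence any vertex $v$ with degree at least $2$ in some component $C_i$ still has degree at least $2$ in $T$, so $v$ remains internal in $T$. Applying this to every vertex whose weight is counted in $w(C_i)$ gives $w(T) \ge \sum_i w(C_i) \ge \frac{1}{2} w(M^*)$. Combining with $w(T^*) \le w(M^*)$ from Corollary~\ref{coro01} yields $w(T) \ge \frac{1}{2} w(T^*)$, which is the claimed approximation ratio.

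The main obstacle is bookkeeping rather than any clever new idea: one must be careful about how $C_i \cap M^*$ is defined (to correctly account for the $M^*$-edges discarded during cycle-breaking in step~2), and about reconciling the paper's convention of attributing weights to vertices that ``surely will become internal'' with their actual status in $T$. Both points reduce to the observation that the third step only adds edges, so any claim of internal-ness established at an earlier stage is preserved throughout the rest of the construction. Once these bookkeeping issues are spelled out, the theorem follows immediately from Lemma~\ref{lemma02} and Corollary~\ref{coro01}.
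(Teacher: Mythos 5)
Your proof is correct and follows essentially the same route as the paper's: apply Lemma~\ref{lemma02} componentwise, observe that the components of $H_1$ partition $M^*$ so the componentwise bounds sum to $w(H_1) \ge w(M^*)/2$, note that the third step only adds edges so internal vertices remain internal, and finish with Corollary~\ref{coro01}. You spell out the bookkeeping (that $C_i \cap M^*$ must count $M^*$-edges by endpoints, including any removed during cycle-breaking, and that step~3 is monotone) a bit more explicitly than the paper, but the argument is the same.
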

\begin{proof}
One clearly sees that {\sc Approx} runs in polynomial time, and in fact the running time is dominated by computing the maximum weight matching $M^*$.

From Lemma~\ref{lemma02}, at the end of the second step of the algorithm {\sc Approx},
every component $C$ of the achieved subgraph $H_1$ is acyclic and satisfies $w(C) \ge w(C \cap M^*) /2$.
Since there is no edge of $M^*$ connecting different components of the subgraph $H_1$,
the total weight of the internal vertices in $H_1$ is already at least $w(M^*) /2$, i.e. $w(H_1) \ge w(M^*) /2$.
The last step of the algorithm may only increase the total weight.
This proves that the total weight of the internal vertices of the tree $T$ produced by {\sc Approx} is
\[
w(T) \ge w(H_1) \ge w(M^*) /2 \ge w(T^*) /2,
\]
where the last inequality is by Corollary~\ref{coro01}, which states that $w(M^*)$ is an upper bound on the optimum.
Thus, {\sc Approx} is a $1/2$-approximation for the MwIST problem.
\end{proof}

\section{A $7/12$-approximation algorithm for claw-free graphs}
We present a better approximation algorithm for the MwIST problem on claw-free graphs.
A graph $G = (V, E)$ is called {\em claw-free} if, for every vertex, at least two of its arbitrary three neighbors are adjacent.
We again assume without loss of generality that every leaf in the graph $G$ has weight $0$.
Besides, we also assume that $|V| \ge 5$.

We first present a reduction rule, which is a subcase of Operation 4 in \cite{CHW16},
that excludes certain induced subgraphs of the given graph $G$ from consideration.

\begin{operation}
\label{op1}
If $G$ has a cut-vertex $v$ such that one connected component $C$ of $G - v$ has two, three or four vertices,
then obtain $G_1$ from $G - V(C)$ by adding a new vertex $u$ of weight $0$ and a new edge $\{v, u\}$.

Let $tw(C)$ denote the maximum total weight of the internal vertices in a spanning tree of the subgraph induced on $V(C) \cup v$,
in which $w(v)$ is revised to $0$.
Then there is an optimal spanning tree $T_1$ of $G_1$ of weight $w(T_1)$ if and only if there is an optimal spanning tree $T$ of $G$ of weight $w(T) = w(T_1) + tw(C)$. 
\end{operation}
\begin{proof}
Let $G_c$ denote the subgraph induced on $V(C) \cup v$, that is, $G_c = G[V(C) \cup v]$;
and let $T_c$ denote the spanning tree of $G_c$ achieving the maximum total weight of the internal vertices, that is, $w(T_c) = tw(C)$
($T_c$ can be computed in $O(1)$-time).

Note that in $T_1$, the leaf $u$ must be adjacent to $v$ and thus $w(v)$ is counted towards $w(T_1)$.
We can remove the edge $\{v, u\}$ and $u$ from $T_1$ while attach the tree $T_c$ to $T_1$ by collapsing the two copies of $v$.
This way, we obtain a spanning tree $T$ of $G$, of weight $w(T) = w(T_1) + w(T_c)$ since $w(v)$ is not counted towards $w(T_c)$.

Conversely, for any spanning tree $T$ of $G$, the vertex $v$ is internal due to the existence of $C$.
We may duplicate $v$ and separate out a subtree $T_c$ on the set of vertices $V(C) \cup v$, in which the weight of $v$ is revised to $0$.
This subtree $T_c$ is thus a spanning tree of $G_c$, and every vertex of $V(C)$ is internal in $T$ if and only if it is internal in $T_c$.
We attach the $0$-weight vertex $u$ to the vertex $v$ in the remainder tree via the edge $\{v, u\}$, which is denoted as $T_1$ and becomes a spanning tree of $G_1$;
note that the vertex $v$ is internal in $T_1$.
It follows that $w(T) = w(T_c) + w(T_1)$.
\end{proof}

See Figure~\ref{fig01} for an illustration of the local configurations specified in Operation~\ref{op1}.
When $|V(C)| = 2$ and $G_c$ is a triangle, we refer the configuration as a {\em hanging triangle};
when $|V(C)| = 3$ and $G_c$ contains a length-$4$ cycle, we refer the configuration as a {\em hanging diamond};
when $|V(C)| = 4$ and $G_c$ contains a length-$5$ cycle, we refer the configuration as a {\em hanging pentagon}.
Applying Operation~\ref{op1}, we assume in the sequel that there is no hanging triangle, or hanging diamond, or hanging pentagon in the given graph $G$.

\begin{figure}[htb]
\begin{center}
\begin{subfigure}[b]{0.25\textwidth}
\includegraphics[width=0.7\textwidth]{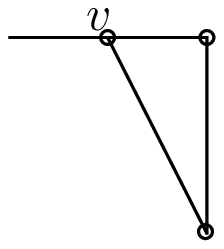}
\caption{A hanging triangle.}
\end{subfigure}%
\hspace{0.05\textwidth}
\begin{subfigure}[b]{0.25\textwidth}
\includegraphics[width=0.65\textwidth]{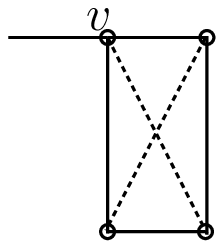}
\caption{A hanging diamond.}
\end{subfigure}%
\hspace{0.05\textwidth}
\begin{subfigure}[b]{0.25\textwidth}
\includegraphics[width=0.8\textwidth]{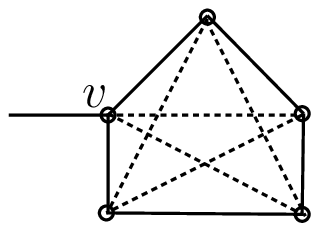}
\caption{A hanging pentagon.}
\end{subfigure}%
\end{center}
\caption{Local configurations of a hanging triangle, a hanging diamond, and a hanging pentagon, specified in Operation~\ref{op1},
	where the dotted edges could be in or not in the graph.\label{fig01}}
\end{figure}

Let $M^*$ denote a maximum weight matching of $G$, which is computed in $O(n \min\{m \log n, n^2\})$-time, where $n = |V|$ and $m = |E|$.
Let the edges of $M^*$ be $e_1, e_2, \ldots, e_k$;
let $e_j = \{a_j, b_j\}$, such that $w(a_j) \ge w(b_j)$, for all $j = 1, 2, \ldots, k$.
For convenience, $a_j$ and $b_j$ are referred to as the {\em head} and the {\em tail} vertices of the edge $e_j$, respectively. 
The same as in the last section, we assume without loss of generality that $w(a_j) > 0$ for each $j$, and consequently the degree of $a_j$ is $d_G(a_j) \ge 2$,
that is, $a_j$ is adjacent to at least one vertex other than $b_j$ in the graph $G$.
Let $A = \{a_j \mid j = 1, 2, \ldots, k\}$, $B = \{b_j \mid j = 1, 2, \ldots, k\}$, and $X = V - (A \cup B)$.

Let 
$E^{aa} = E(A, A)$, i.e., the set of edges each connecting two vertices of $A$,
$E^{ax} = E(A, X)$, i.e., the set of edges each connecting a vertex of $A$ and a vertex of $X$, and
$E^{ab} = E(A, B)$, i.e., the set of edges each connecting a vertex of $A$ and a vertex of $B$, respectively.

Let $M^{aa} \subseteq E^{aa}$ be a maximum cardinality matching within the edge set $E^{aa}$.
We next prove a structure property of the spanning subgraph $G[V, M^* \cup M^{aa}]$, which has the edge set $M^* \cup M^{aa}$.
For an edge $e_j = \{a_j, b_j\}$ of $M^*$, if $a_j$ is not incident to any edge of $M^{aa}$, then $e_j$ is called {\em isolated} in $G[V, M^* \cup M^{aa}]$.

\begin{lemma}
\label{lemma03}
Assume that two edges $e_{j_1} = \{a_{j_1}, b_{j_1}\}$ and $e_{j_2} = \{a_{j_2}, b_{j_2}\}$ of $M^*$ are connected
by the edge $\{a_{j_1}, a_{j_2}\} \in M^{aa}$ in $G[V, M^* \cup M^{aa}]$.
Then there is at most one isolated edge $e_{j_3} = \{a_{j_3}, b_{j_3}\}$ whose head $a_{j_3}$ can be adjacent to $a_{j_1}$ or $a_{j_2}$.
\end{lemma}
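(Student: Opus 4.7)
The plan is to argue by contradiction, exploiting the maximality of $M^{aa}$ together with the claw-free property at $a_{j_1}$ or $a_{j_2}$. Suppose, for contradiction, that there are two distinct isolated edges $e_{j_3} = \{a_{j_3}, b_{j_3}\}$ and $e_{j_4} = \{a_{j_4}, b_{j_4}\}$ of $M^*$ such that each of $a_{j_3}, a_{j_4}$ is adjacent in $G$ to some vertex in $\{a_{j_1}, a_{j_2}\}$. Since the four edges $e_{j_1}, e_{j_2}, e_{j_3}, e_{j_4}$ of $M^*$ are pairwise disjoint, the heads $a_{j_1}, a_{j_2}, a_{j_3}, a_{j_4}$ are four distinct vertices of $A$; moreover, $a_{j_3}$ and $a_{j_4}$ are unmatched in $M^{aa}$ (by definition of ``isolated'').

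I would split the argument into three cases according to the adjacency pattern between $\{a_{j_3}, a_{j_4}\}$ and $\{a_{j_1}, a_{j_2}\}$. The easy case is the crossing configuration, where (up to relabeling) $a_{j_1}a_{j_3}$ and $a_{j_2}a_{j_4}$ are edges of $E^{aa}$. Then $M^{aa} \setminus \{a_{j_1}a_{j_2}\} \cup \{a_{j_1}a_{j_3}, a_{j_2}a_{j_4}\}$ is still a matching in $E^{aa}$ (the four endpoints are distinct and none of $a_{j_3}, a_{j_4}$ was previously matched), and its cardinality exceeds $|M^{aa}|$, contradicting the maximality of $M^{aa}$.

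The interesting case is when both $a_{j_3}$ and $a_{j_4}$ are adjacent to the same vertex, say $a_{j_1}$ (the symmetric case with $a_{j_2}$ is handled identically). Here I would invoke claw-freeness at $a_{j_1}$: the three distinct neighbors $a_{j_2}, a_{j_3}, a_{j_4}$ cannot form an independent set, so at least one of the edges $a_{j_3}a_{j_4}$, $a_{j_2}a_{j_3}$, $a_{j_2}a_{j_4}$ is present in $G$. I would then deal with the three subcases in turn. If $a_{j_3}a_{j_4} \in E^{aa}$, simply adding this edge to $M^{aa}$ (both endpoints are free) yields a larger matching in $E^{aa}$. If $a_{j_2}a_{j_3} \in E^{aa}$, then $M^{aa} \setminus \{a_{j_1}a_{j_2}\} \cup \{a_{j_2}a_{j_3}, a_{j_1}a_{j_4}\}$ is a strictly larger matching in $E^{aa}$; the subcase $a_{j_2}a_{j_4} \in E^{aa}$ is symmetric with the roles of $a_{j_3}$ and $a_{j_4}$ swapped. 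Each of these swaps contradicts the maximality of $M^{aa}$, completing the proof.

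The main obstacle I expect is purely bookkeeping: making sure the vertices involved in each swap are indeed distinct and that the replacement edges use only endpoints freed by the removal of $\{a_{j_1}, a_{j_2}\}$ (together with the two previously-unmatched isolated heads). Once the case split is set up cleanly, every step is a one-line augmentation; no subtle use of the weights is needed because this lemma is a purely combinatorial (cardinality) statement about $M^{aa}$.
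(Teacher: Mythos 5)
Your proof is correct and follows essentially the same route as the paper: both argue by contradiction, first showing (via an augmenting swap) that the two isolated heads must attach to the same endpoint of the $M^{aa}$-edge, then invoking claw-freeness at that endpoint to find a new edge among $a_{j_2}, a_{j_3}, a_{j_4}$ that again contradicts the maximality of $M^{aa}$. Your version simply spells out the individual augmenting exchanges that the paper leaves implicit.
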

\begin{proof}
By contradiction, assume that there are two isolated edges $e_{j_3} = \{a_{j_3}, b_{j_3}\}$ and $e_{j_4} = \{a_{j_4}, b_{j_4}\}$
such that both the vertices $a_{j_3}$ and $a_{j_4}$ are adjacent to $a_{j_1}$ or $a_{j_2}$.
Then from the maximum cardinality of $M^{aa}$, $a_{j_3}$ and $a_{j_4}$ must be both adjacent to $a_{j_1}$ or both adjacent to $a_{j_2}$.
Suppose they are both adjacent to $a_{j_1}$;
from the claw-free property, at least two of $a_{j_2}$, $a_{j_3}$ and $a_{j_4}$ are adjacent, which contradicts the maximum cardinality of $M^{aa}$.
This proves the lemma.
\end{proof}

For an isolated edge $e_{j_3} = \{a_{j_3}, b_{j_3}\}$ whose head is adjacent to an edge $\{a_{j_1}, a_{j_2}\} \in M^{aa}$ (i.e., satisfying Lemma~\ref{lemma03}),
and assuming that $\{a_{j_2}, a_{j_3}\} \in E^{aa}$, we add the edge $\{a_{j_2}, a_{j_3}\}$ to $G[V, M^* \cup M^{aa}]$;
consequently the edge $e_{j_3}$ is no longer isolated.
Let $N^{aa}$ denote the set of such added edges associated with $M^{aa}$.
At the end, the achieved subgraph is denoted as $H_0 = G[V, M^* \cup M^{aa} \cup N^{aa}]$.

\begin{lemma}
\label{lemma04}
In the subgraph $H_0 = G[V, M^* \cup M^{aa} \cup N^{aa}]$,
\begin{itemize}
\parskip=0pt
\item
	every connected component containing more than one edge has either two or three edges from $M^*$,
	with their head vertices connected (by the edges of $M^{aa} \cup N^{aa}$) into a path;
	it is called a {\em type-I} component (see {\em Figure~\ref{fig02a}}) and a {\em type-II} component (see {\em Figure~\ref{fig02b}}), respectively;
\item
	for every isolated edge $e_j = \{a_j, b_j\}$, the head vertex is incident with at least one edge of $E^{ax} \cup E^{ab}$, but with no edge of $E^{aa}$.
\end{itemize}
\end{lemma}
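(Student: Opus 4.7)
The plan is to prove the two bullets separately, starting with the second one since it directly exercises the definitions of \emph{isolated} and of $N^{aa}$ that will also be needed for the first bullet.

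For the second bullet, I would argue by contradiction: suppose the isolated edge $e_j = \{a_j, b_j\}$ has head $a_j$ incident to some $\{a_j, a_i\} \in E^{aa}$. Isolation forces $a_j$ to be unmatched by $M^{aa}$, so I would split on whether $a_i$ is $M^{aa}$-saturated. If it is not, then $M^{aa} \cup \{\{a_j, a_i\}\}$ is a strictly larger matching inside $E^{aa}$, contradicting the maximum cardinality of $M^{aa}$. If instead $\{a_i, a_\ell\} \in M^{aa}$, then the head of the isolated $M^*$-edge $e_j$ is adjacent to the $M^{aa}$-edge $\{a_i, a_\ell\}$, so the $N^{aa}$ construction would attach an edge at $a_j$, contradicting the isolation of $e_j$ in $H_0$. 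Hence $a_j$ has no $E^{aa}$-neighbor; combined with $d_G(a_j) \ge 2$, which forces a neighbor of $a_j$ other than $b_j$, this gives an incident edge in $E^{ab} \cup E^{ax}$.

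For the first bullet, I would take a component $C$ of $H_0$ with at least two edges. Because the only non-$M^*$ edges available are in $M^{aa} \cup N^{aa}$, and every $N^{aa}$-edge is, by construction, incident to an endpoint of some $M^{aa}$-edge, $C$ must contain some $\{a_{j_1}, a_{j_2}\} \in M^{aa}$, which places two $M^*$-edges $e_{j_1}, e_{j_2}$ in $C$ with their heads on a one-edge path. I would then bound the number of additional $M^*$-edges in $C$ by one: a second $M^{aa}$-edge would be vertex-disjoint from $\{a_{j_1}, a_{j_2}\}$ and could only reach $C$ through $N^{aa}$-edges, but by Lemma~\ref{lemma03} at most one isolated $M^*$-edge has its head $E^{aa}$-adjacent to the pair $(a_{j_1}, a_{j_2})$, so at most one $N^{aa}$-edge attaches there. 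In the three-edge case the resulting $N^{aa}$-edge joins a new head $a_{j_3}$ to $a_{j_1}$ or $a_{j_2}$, giving the claimed two-edge head path.

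The step I expect to be the main obstacle is ruling out a chain of the form $M^{aa}$--$N^{aa}$--$M^{aa}$ that would fuse two disjoint $M^{aa}$-edges through a shared isolated $M^*$-edge into a component with five $M^*$-edges. Closing this gap needs three ingredients working in concert: Lemma~\ref{lemma03}, the fact established while proving the second bullet that an isolated head has no $E^{aa}$-neighbor outside those triggering its own $N^{aa}$-edge, and the construction's rule of adding a single $N^{aa}$-edge per isolated $M^*$-edge. Once these are combined, no second $M^{aa}$-edge can reach $C$, and the component structure collapses to the claimed type-I and type-II forms.
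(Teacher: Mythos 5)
Your proof is correct and follows essentially the same route as the paper, whose own proof is just the one-liner that the lemma ``directly follows the definition of the subgraph $H_0$ and Lemma~\ref{lemma03}''; you supply precisely the elided details, using the maximality of $M^{aa}$, the one-$N^{aa}$-edge-per-$M^{aa}$-edge bound from Lemma~\ref{lemma03}, and the one-$N^{aa}$-edge-per-isolated-edge rule. One small simplification: the chain $M^{aa}$--$N^{aa}$--$M^{aa}$ is excluded immediately because every $N^{aa}$-edge by definition joins an endpoint of an $M^{aa}$-edge to the head of an isolated $M^*$-edge, and an isolated head is by definition not matched by $M^{aa}$, so exactly one endpoint of an $N^{aa}$-edge can lie on an $M^{aa}$-edge; there is no need to route this through the second bullet.
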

\begin{proof}
The proof directly follows the definition of the subgraph $H_0$ and Lemma~\ref{lemma03}.
\end{proof}

\begin{figure}[htb]
\begin{center}
\begin{subfigure}[b]{0.33\textwidth}
\includegraphics[width=0.55\textwidth]{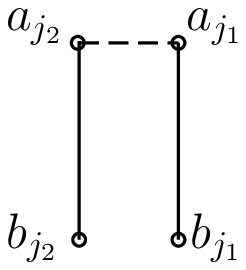}
\caption{A type-I component.\label{fig02a}}
\end{subfigure}%
\hspace{0.05\textwidth}
\begin{subfigure}[b]{0.33\textwidth}
\includegraphics[width=0.7\textwidth]{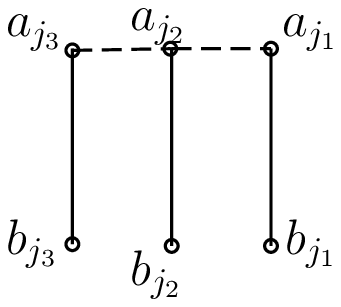}
\caption{A type-II component.\label{fig02b}}
\end{subfigure}%
\end{center}
\caption{The configurations of a type-I component and a type-II component.\label{fig02}}
\end{figure}

The following lemma is analogous to Lemma~\ref{lemma03}.

\begin{lemma}
\label{lemma05}
Any vertex of $X$ can be adjacent to the head vertices of at most two isolated edges in the subgraph $H_0 = G[V, M^* \cup M^{aa} \cup N^{aa}]$.
\end{lemma}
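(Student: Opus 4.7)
The plan is to argue by contradiction, using the claw-free property of $G$ together with the maximality of $M^{aa}$.

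Suppose, for the sake of contradiction, that some vertex $x \in X$ is adjacent in $G$ to the heads $a_{j_1}, a_{j_2}, a_{j_3}$ of three distinct isolated edges $e_{j_1}, e_{j_2}, e_{j_3}$ in $H_0$. First, I would record the consequence of being isolated in $H_0$: since $e_{j_i}$ is isolated, its head $a_{j_i}$ is incident to no edge of $M^{aa} \cup N^{aa}$, so in particular $a_{j_i}$ is $M^{aa}$-unmatched for each $i \in \{1,2,3\}$. Also, $x \in X$ implies $x$ is not an endpoint of any edge of $M^*$, so $a_{j_1}, a_{j_2}, a_{j_3}$ are three distinct neighbors of $x$ none of which equals $x$.

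Next, I would invoke the claw-free hypothesis on the triple $\{a_{j_1}, a_{j_2}, a_{j_3}\}$ of neighbors of $x$: at least two of them must be adjacent in $G$, say $a_{j_1}$ and $a_{j_2}$. Since both $a_{j_1}$ and $a_{j_2}$ lie in $A$, the edge $\{a_{j_1}, a_{j_2}\}$ belongs to $E^{aa}$. Combined with the previous observation that $a_{j_1}$ and $a_{j_2}$ are both $M^{aa}$-unmatched, the set $M^{aa} \cup \{\{a_{j_1}, a_{j_2}\}\}$ is again a matching inside $E^{aa}$, and it is strictly larger than $M^{aa}$. This contradicts the maximum cardinality of $M^{aa}$, completing the proof.

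The argument is short and essentially parallels the proof of Lemma~\ref{lemma03}; the only subtle point, and the main thing to get right, is the bookkeeping on which vertices are unmatched. The key observation is that an isolated edge in $H_0$ (as opposed to an edge that became non-isolated via $N^{aa}$) has its head vertex not covered by $M^{aa}$ at all, so the claw-free property immediately yields an augmenting edge contradicting the maximality of $M^{aa}$.
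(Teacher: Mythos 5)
Your argument is correct and follows essentially the same route as the paper: both invoke the claw-free property at $x$ to find two adjacent heads among $a_{j_1}, a_{j_2}, a_{j_3}$ and then derive a contradiction. The only cosmetic difference is that the paper concludes by citing Lemma~\ref{lemma04} (whose third bullet says an isolated head touches no edge of $E^{aa}$), whereas you unwind that citation and argue directly from the maximality of $M^{aa}$; since isolated heads are $M^{aa}$-unmatched, the new $E^{aa}$ edge would augment $M^{aa}$, which is exactly the observation underlying Lemma~\ref{lemma04} anyway.
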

\begin{proof}
By contradiction, assume that $x \in X$ and there are three isolated edges $e_{j_k} = \{a_{j_k}, b_{j_k}\}$, $k = 1, 2, 3$,
in the subgraph $H_0 = G[V, M^* \cup M^{aa} \cup N^{aa}]$, such that the edge $\{a_{j_k}, x\} \in E^{ax}$.
From the claw-free property, at least two of $a_{j_1}$, $a_{j_2}$ and $a_{j_3}$ are adjacent, which contradicts Lemma~\ref{lemma04}.
This proves the lemma.
\end{proof}

For an isolated edge $e_j = \{a_j, b_j\}$ in the subgraph $H_0 = G[V, M^* \cup M^{aa} \cup N^{aa}]$ whose head is adjacent to a vertex $x \in X$
(i.e., satisfying Lemma~\ref{lemma05}), we add the edge $\{a_j, x\}$ to $H_0$;
consequently the edge $e_j$ is no longer isolated.
Let $N^{ax}$ denote the set of such added edges associated with $X$.
At the end, the achieved subgraph is denoted as $H_1 = G[V, M^* \cup M^{aa} \cup N^{aa} \cup N^{ax}]$.

\begin{lemma}
\label{lemma06}
In the subgraph $H_1 = G[V, M^* \cup M^{aa} \cup N^{aa} \cup N^{ax}]$,
\begin{itemize}
\parskip=0pt
\item
	every connected component of $H_0$ containing more than one edge remains unchanged in $H_1$;
\item
	every connected component containing a vertex $x$ of $X$ and some other vertex has either one or two edges from $M^*$,
	with their head vertices connected (by the edges of $N^{ax}$) to the vertex $x$;
	it is called a {\em type-III} component (see {\em Figure~\ref{fig03a}}) and a {\em type-IV} component (see {\em Figure~\ref{fig03b}}), respectively;
\item
	for every isolated edge $e_j = \{a_j, b_j\}$, the head vertex is incident with at least one edge of $E^{ab}$, but with no edge of $E^{aa} \cup E^{ax}$.
\end{itemize}
\end{lemma}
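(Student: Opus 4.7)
The plan is to verify the three bulleted statements in turn, relying on Lemmas~\ref{lemma04} and~\ref{lemma05} together with the rule that defines $N^{ax}$.

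For the first bullet, I would observe that every edge of $N^{ax}$ has one endpoint in $X$ and the other endpoint equal to the head $a_j$ of some edge $e_j$ that is isolated in $H_0$. By Lemma~\ref{lemma04}, a connected component of $H_0$ with more than one edge is either type-I or type-II, so its vertex set lies entirely within $A \cup B$ (which is disjoint from $X$) and contains no head of an isolated $M^*$-edge. Consequently, no edge of $N^{ax}$ can attach to or pass through such a component, and the first bullet follows.

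For the second bullet, the only way a vertex $x \in X$ joins a non-trivial component of $H_1$ is through an incident edge of $N^{ax}$, and by the defining rule such an edge connects $x$ to the head $a_j$ of some edge $e_j$ that was isolated in $H_0$. Lemma~\ref{lemma05} caps the number of isolated heads adjacent to $x$ at two, which immediately yields the type-III (exactly one attached $M^*$-edge) or type-IV (exactly two attached $M^*$-edges) shape described, with $x$ serving as the common junction vertex.

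For the third bullet, suppose $e_j = \{a_j, b_j\}$ remains isolated in $H_1$; then it was already isolated in $H_0$, so by Lemma~\ref{lemma04}, $a_j$ is incident to no edge of $E^{aa}$ and to at least one edge of $E^{ax} \cup E^{ab}$. I claim $a_j$ cannot be incident to any edge of $E^{ax}$: otherwise there would be $x \in X$ with $\{a_j, x\} \in E$, and since Lemma~\ref{lemma05} guarantees that at most one other isolated head is adjacent to $x$, the construction of $N^{ax}$ would have added $\{a_j, x\}$, contradicting the assumed isolation of $e_j$. Hence the $E^{ax} \cup E^{ab}$-edge guaranteed by Lemma~\ref{lemma04} must in fact lie in $E^{ab}$, as desired. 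The proof is really a careful application of the two preceding structural lemmas plus the definition of $N^{ax}$, and I do not expect any step to be genuinely difficult; the only point that demands care is in the third bullet, where one must use that the construction exhausts every eligible $(a_j, x)$-adjacency in order to derive the contradiction. The rest is routine bookkeeping.
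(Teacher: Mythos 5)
Your proof is correct and takes the same route the paper sketches (the paper's own proof is the one-liner that the claim follows directly from the definition of $H_1$ together with Lemmas~\ref{lemma04} and~\ref{lemma05}), with the routine details filled in. A minor note: in the third bullet the appeal to Lemma~\ref{lemma05} is superfluous --- the contradiction already follows from the fact that the construction of $N^{ax}$ processes every $H_0$-isolated edge whose head has a neighbor in $X$.
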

\begin{proof}
The proof directly follows the definition of the subgraph $H_1$ and Lemmas~\ref{lemma04} and \ref{lemma05}.
\end{proof}

\begin{figure}[htb]
\begin{center}
\begin{subfigure}[b]{0.3\textwidth}
\includegraphics[width=0.45\textwidth]{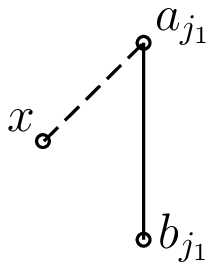}
\caption{A type-III component.\label{fig03a}}
\end{subfigure}%
\hspace{0.05\textwidth}
\begin{subfigure}[b]{0.3\textwidth}
\includegraphics[width=0.75\textwidth]{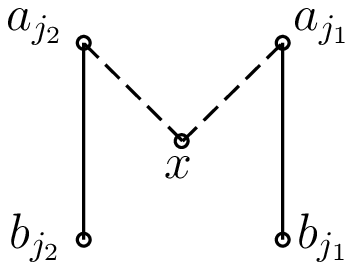}
\caption{A type-IV component.\label{fig03b}}
\end{subfigure}%
\end{center}
\caption{The configurations of a type-III component and a type-IV component.\label{fig03}}
\end{figure}

Let $E^{ab}_0$ denote the subset of $E^{ab}$, to include all the edges $\{a_j, b_\ell\}$ where
both the edges $e_j = \{a_j, b_j\}$ and $e_\ell = \{a_\ell, b_\ell\}$ are isolated in the subgraph $H_1 = G[V, M^* \cup M^{aa} \cup N^{aa} \cup N^{ax}]$.
Let $M^{ab} \subseteq E^{ab}_0$ be a maximum cardinality matching within the edge set $E^{ab}_0$.
Let $H_2 = G[V, M^* \cup M^{aa} \cup N^{aa} \cup N^{ax} \cup M^{ab}]$ be the subgraph obtained from $H_1$ by adding all the edges of $M^{ab}$.
One clearly sees that all the isolated edges in the subgraph $H_1$ are connected by the edges of $M^{ab}$ into disjoint paths and cycles;
while a path may contain any number of isolated edges, a cycle contains at least two isolated edges.
Such a path and a cycle component are called a {\em type-V} component (see Figure~\ref{fig04a}) and a {\em type-VI} component (see Figure~\ref{fig04b}),
respectively.

\begin{figure}[htb]
\begin{center}
\begin{subfigure}[b]{0.35\textwidth}
\includegraphics[width=0.9\textwidth]{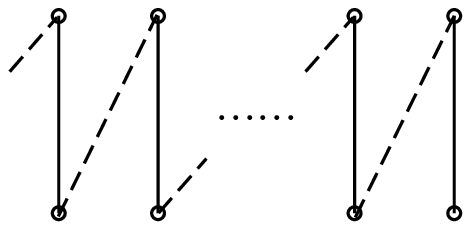}
\caption{A type-V component.\label{fig04a}}
\end{subfigure}%
\hspace{0.05\textwidth}
\begin{subfigure}[b]{0.35\textwidth}
\includegraphics[width=0.8\textwidth]{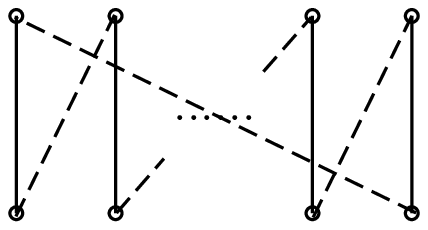}
\caption{A type-VI component.\label{fig04b}}
\end{subfigure}%
\end{center}
\caption{The configurations of a type-V component and a type-VI component.\label{fig04}}
\end{figure}

Note that in a type-V component, there is exactly one head vertex of degree $1$ and there is exactly one tail vertex of degree $1$.
We assume that for the tail vertex in a type-V component, it is not adjacent to the head of any other edge (via an edge of $E^{ab}$) in the same component;
otherwise, through an edge exchange, the component is decomposed into a smaller type-V component and a new type-VI component.

\begin{lemma}
\label{lemma07}
In the subgraph $H_2 = G[V, M^* \cup M^{aa} \cup N^{aa} \cup N^{ax} \cup M^{ab}]$,
for every type-V component, the degree-$1$ head vertex is adjacent (via an edge of $E^{ab}$) to the tail vertex of an edge in a type-I, -II, -III, or -IV component;
on the other hand, the tail vertex of every edge in a type-I, -II, -III, or -IV component is adjacent to at most one such head vertex.
\end{lemma}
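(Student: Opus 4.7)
For part 2, I follow the template of Lemmas~\ref{lemma03} and \ref{lemma05}: suppose, for contradiction, that $b'$ (a tail in a type-I, -II, -III, or -IV component with $M^*$-mate $a_{i'}$) has two distinct degree-$1$ heads $a^{(1)}$ and $a^{(2)}$ of type-V components as $E^{ab}$-neighbors. The three vertices $a_{i'}, a^{(1)}, a^{(2)}$ are pairwise distinct $G$-neighbors of $b'$, so the claw-free property at $b'$ forces one of their three pairs to form a $G$-edge. Each such edge is an $E^{aa}$ edge incident to $a^{(1)}$ or $a^{(2)}$, and by Lemma~\ref{lemma06} these heads (of isolated-in-$H_1$ edges) have no $E^{aa}$ incidence---a contradiction.

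For part 1, let $a$ be the degree-$1$ head of a type-V component $P$ with $M^*$-mate $b$ via $e_j = \{a, b\}$. The WLOG assumption $w(a) > 0$ yields $\deg_G(a) \ge 2$, and Lemma~\ref{lemma06} forces every $G$-neighbor of $a$ other than $b$ to lie in $B$ (i.e., to be the tail of some $M^*$-edge). Pick such a neighbor $b'$ with $M^*$-edge $e_i = \{a_i, b'\}$: if $e_i$ belongs to a type-I, -II, -III, or -IV component, part 1 is established; otherwise $e_i$ is isolated in $H_1$ and $\{a, b'\} \in E^{ab}_0$.

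In this remaining case, the plan is to combine the maximum cardinality of $M^{ab}$, the edge-exchange assumption, claw-freeness, and Operation~\ref{op1} to reach a contradiction. First, $M^{ab}$ being maximum rules out $b'$ being unmatched in $M^{ab}$ (else $\{a,b'\}$ would be a single-edge augmenting path). Second, applying the edge-exchange analysis symmetrically to the one stated for the degree-$1$ tail excludes $b'$ lying in $P$ itself (otherwise one could exchange $M^{ab}$-edges to decompose $P$ into a smaller type-V plus a new type-VI). Third, for $b'$ in a different type-V or type-VI component, one argues that $a$'s local neighborhood in $G$ is so constrained---by claw-freeness at $b$ together with Lemma~\ref{lemma06}'s prohibition of $E^{aa}$-incidences at heads of isolated-in-$H_1$ edges---that the small subgraph centered on $\{a, b, a_i, b'\}$ forms a cut-vertex configuration (hanging triangle, diamond, or pentagon) precluded by Operation~\ref{op1}.

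The main obstacle is the third step: the delicate case analysis must exploit claw-freeness repeatedly (at $b$, $a_i$, and $b'$) to control the out-of-$P$ connections of $a_i$ and $b'$, and then invoke Operation~\ref{op1} to rule out the resulting small cut-vertex structures. In particular, one must handle the subcase where both $a_i$ and $b'$ independently connect to the rest of $G$, which may force a hanging pentagon rather than a hanging triangle or diamond. Part~2, by contrast, is a direct claw-free argument and does not share these intricacies.
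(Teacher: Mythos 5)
Your Part 2 is correct and is exactly the paper's argument: claw-freeness at $b'$ forces an $E^{aa}$ edge incident to one of $a^{(1)},a^{(2)}$, contradicting Lemma~\ref{lemma06}. Part 1 begins well ($w(a)>0$ gives a $B$-neighbor $b'\neq b$; if $e_i$ is isolated one must rule this out), and your first sub-step (an unmatched $b'$ contradicts the maximality of $M^{ab}$) is right. But the remainder has a genuine gap, and you flag it yourself.

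The missing observation is that once $b'$ is matched in $M^{ab}$, say to a head $a_m$, you can reuse the \emph{same} claw-free template as in Part 2, applied at $b'$ rather than at $b$: the three $G$-neighbors $a_i$ (the $M^*$-mate of $b'$), $a_m$ (its $M^{ab}$-mate), and $a$ are all heads of edges isolated in $H_1$; by claw-freeness two of them are adjacent, giving an $E^{aa}$ edge incident to such a head, contradicting Lemma~\ref{lemma06}. This covers every matched $b'$ at once, whether $b'$ lies in $P$ or in a different type-V/VI component, so your ``second'' (edge-exchange) step is redundant and your ``third'' step is unnecessary. As written, the third step applies claw-freeness at the wrong vertex ($b$ instead of $b'$), $b$ is an interior vertex of $P$ so its neighborhood is not the relevant one, and the attempt to manufacture a hanging triangle/diamond/pentagon and invoke Operation~\ref{op1} has no clear route — nothing in the hypothesis bounds the component $C'$ by a cut vertex, and you concede the case analysis is not done. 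The lemma does not need Operation~\ref{op1} at all; it is a two-line combination of the $M^{ab}$-maximality argument and the claw-free-at-the-tail argument you already used in Part 2.
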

\begin{proof}
We first show that the degree-$1$ head vertex in a type-V component $C$, denoted as $a_j$,
cannot be adjacent to the tail of any edge in another type-V or a type-VI component $C'$.
By contradiction, assume $\{a_j, b_\ell\} \in E^{ab}$ and $e_\ell$ is in $C'$.
If the tail $b_\ell$ is already incident to some edge of $M^{ab}$, say $\{a_i, b_\ell\}$,
then by the claw-free property at least two of $a_i, a_j, a_\ell$ must be adjacent, contradicting the fact that they are all isolated in the subgraph $H_1$.
In the other case, the tail $b_\ell$ is the tail vertex of $C'$ (which is a type-V component too),
then it violates the maximum cardinality of $M^{ab}$ since $\{a_j, b_\ell\} \in E^{ab}$ can be added to increase the size of $M^{ab}$.
This proves the first half of the lemma.

The second half can be proven by a simple contradiction using the claw-free property of the graph.
\end{proof}

Subsequently, every type-V component $C$ is connected to a type-I, -II, -III, or -IV component $C'$,
via the edge between the degree-$1$ head vertex of $C$ and the tail vertex of an edge in $C' \cap M^*$.
This way, the degree-$1$ tail vertex of $C$ takes up the role of ``the tail vertex'' of the edge in $C' \cap M^*$,
to become a tail vertex in the newly formed bigger component.
For simplicity, the type of the component $C'$ is passed to the newly formed bigger component.
Denote this set of newly added edges as $N^{ab}$, which is a subset of $E^{ab} - M^{ab}$.
The achieved subgraph is denoted as $H_3 = G[V, M^* \cup M^{aa} \cup N^{aa} \cup N^{ax} \cup M^{ab} \cup N^{ab}]$.

\begin{lemma}
\label{lemma08}
In the subgraph $H_3 = G[V, M^* \cup M^{aa} \cup N^{aa} \cup N^{ax} \cup M^{ab} \cup N^{ab}]$,
\begin{itemize}
\parskip=0pt
\item
	there is no isolated type-V component;
\item
	the head vertex of every edge of $M^*$ has degree at least $2$.
\end{itemize}
\end{lemma}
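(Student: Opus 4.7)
The plan is to verify each assertion by a direct appeal to the construction of $H_3$ together with Lemma~\ref{lemma07}, followed by a small case analysis on the component of $H_2$ that contains a given head vertex.

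For the first assertion, the argument is essentially immediate from Lemma~\ref{lemma07}. For any type-V component $C$ present in $H_2$, that lemma produces an edge $\{a, b'\} \in E^{ab}$, where $a$ is the degree-$1$ head of $C$ and $b'$ is the tail of some edge belonging to a type-I, -II, -III, or -IV component $C'$. By the very definition of $N^{ab}$ such an edge is added to form $H_3$, merging $C$ with $C'$; hence no type-V component remains isolated in $H_3$. The second half of Lemma~\ref{lemma07} further guarantees that distinct type-V components never compete for the same tail, so these simultaneous additions are conflict-free, although strictly speaking this is more than the first bullet demands.

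For the second assertion, I would fix an arbitrary edge $e_j = \{a_j, b_j\} \in M^*$ and consult the type of the component of $H_2$ that contains $e_j$. If the component is of type I or II, then $a_j$ is already incident to an edge of $M^{aa} \cup N^{aa}$ by the construction of $H_0$; if of type III or IV, then $a_j$ is incident to an edge of $N^{ax}$ by the construction of $H_1$; if of type VI, a cycle, then $a_j$ carries an incident $M^{ab}$-edge by the construction of $H_2$; and if of type V, a path, then either $a_j$ already carries an incident $M^{ab}$-edge in $H_2$, or $a_j$ is precisely the designated degree-$1$ head of the path, in which case it acquires an incident $N^{ab}$-edge in $H_3$ by the first assertion. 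In every case $a_j$ is incident to $e_j$ and to at least one further edge in $H_3$, so $d_{H_3}(a_j) \ge 2$.

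The main (and really only) point that needs care is to ensure the enumeration above is exhaustive, i.e.\ that every head of every $M^*$-edge lies in exactly one component of one of the six listed types. This is where I would lean on Lemmas~\ref{lemma04} and~\ref{lemma06}, which together with the explicit definition of types V and VI from the isolated edges of $H_1$ partition the non-trivial components of $H_2$ into exactly these categories. Once this partition is recognised and the degree-$1$ head in a type-V path is singled out as the one delicate subcase, the rest of the proof is routine.
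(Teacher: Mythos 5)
Your proof is correct and follows essentially the same route as the paper's: the first bullet is an immediate consequence of Lemma~\ref{lemma07} together with the definition of $N^{ab}$, and the second bullet reduces to observing that type-V components are the only place a degree-$1$ head could occur and those are all eliminated by the first bullet. Your case-by-case check over the six component types is just a spelled-out version of the paper's one-line justification (``type-V is the only type of component containing a degree-$1$ head vertex''), and your aside about the additions to $N^{ab}$ being conflict-free is a valid but, as you note, unnecessary refinement for this lemma.
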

\begin{proof}
The first half of the lemma follows from Lemma~\ref{lemma07};
the second half holds since there is no more isolated type-V component, which is the only type of component containing a degree-$1$ head vertex.
\end{proof}

We next create a set $F$ of edges that are used to interconnect the components in the subgraph $H_3$.
$F$ is initialized to be empty.
By Lemma~\ref{lemma08}, for every type-I, -II, -III, or -IV component $C$ in the subgraph $H_3$, of weight $w(C \cap M^*)$,
it is a tree and the total weight of the internal vertices therein is at least $\frac 12 w(C \cap M^*)$;
for every type-VI component $C$, which is a cycle, by deleting the lightest edge of $C \cap M^*$ from $C$ we obtain a path and
the total weight of the internal vertices in this path is also at least $\frac 12 w(C \cap M^*)$.
In the next five lemmas, we show that every component $C$ in the subgraph $H_3$ can be converted into a tree on the same set of vertices,
possibly with one edge specified for connecting a leaf of this tree outwards,
such that the total weight of the internal vertices (and the leaf, if specified) in the tree is at least $\frac 23 w(C \cap M^*)$.
The specified edge for the interconnection purpose, is added to $F$.
At the end of the process, the component $C$ is called {\em settled}.
A settled component $C$ can be expressed in multiple equivalent ways, for example,
that the total weight of the internal vertices (and the leaf, if specified) in the resulting tree is at least $\frac 23 w(C \cap M^*)$,
or that the total weight of the internal (and the leaf, if specified) vertices in the resulting tree is
at least twice the total weight of the leaves (excluding the specified leaf, if any).

In the sequel, we abuse the vertex notation to also denote its weight in math formulae;
this simplifies the presentation and the meaning of the notation is easily distinguishable.
For estimating the total weight of the internal vertices in a tree in the sequel, we frequently use the following inequality:
\[
\forall w_1, w_2, w_3 \in {\mathbb R}, \ w_1 + w_2 + w_3 - \min\{w_1, w_2, w_3\} \ge 2 \min\{w_1, w_2, w_3\}.
\]

\begin{lemma}
\label{lemma09}
A type-I component in the subgraph $H_3$ can be settled.
\end{lemma}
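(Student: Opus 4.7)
My plan rests on the observation that every type-I component in $H_3$ is already a path. Its core is $\beta_1 - \alpha_1 - \alpha_2 - \beta_2$ with $w(\alpha_j) \ge w(\beta_j) \ge 0$; when a type-V chain is attached at some $\beta_i$, the vertex $\beta_i$ becomes internal and the path is extended through the chain, its new endpoint being the degree-$1$ tail of that attached type-V. In every case I would take this path as the spanning tree of $C$ and look for a good interconnection leaf among its two endpoints.

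Let $u$ and $v$ denote the two endpoints of the path; both are tails of $M^*$-edges. I may assume $w(u) \ge w(v)$, and I would declare $u$ the specified interconnection leaf. Since the path has all vertices internal except the two endpoints, and since every vertex of a type-I component is an endpoint of some $M^*$-edge of $C$, the internal-plus-specified weight of the tree equals $w(C \cap M^*) - w(v)$; hence the $\frac{2}{3}$-settlement reduces to proving $w(C \cap M^*) \ge 3 w(v)$. The $M^*$-edge containing $u$ has weight at least $2 w(u)$ because its head weighs at least as much as its tail $u$, and the $M^*$-edge containing $v$ likewise has weight at least $2 w(v)$. These two $M^*$-edges are distinct, so $w(C \cap M^*) \ge 2 w(u) + 2 w(v) \ge 4 w(v) \ge 3 w(v)$, as required.

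What remains is to cope with the situation where $u$ has no edge leaving $C$ and so cannot actually be specified for interconnection. If $w(u) = 0$ then $w(v) = 0$ as well, and the raw internal weight of the path already exceeds $\frac{2}{3} w(C \cap M^*)$. Otherwise $u$ has degree at least two in $G$, and since no neighbour of $u$ is outside $C$ it must have some neighbour $y$ inside $C$ different from its matching partner. I would then run a finite case analysis on the position of $y$ along the path: either there is an alternative spanning tree of $C$ in which a vertex with a genuine external edge becomes a leaf (and the same head-at-least-tail inequality re-runs to give the $\frac{2}{3}$-bound), or the local picture around $u$ and $y$, combined with the absence of external edges at $u$, forces a hanging triangle, a hanging diamond, or a hanging pentagon, contradicting Operation~\ref{op1}.

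The hard part will be this last case analysis. The claw-free hypothesis keeps the enumeration very short, since any three pairwise non-adjacent neighbours of $u$ would constitute a claw, and the Operation~\ref{op1} exclusions rule out exactly those subcases in which no vertex with an external edge can be turned into a leaf. I expect only a handful of canonical configurations to survive, and each of them will admit an explicit rerouted spanning tree to which the head-at-least-tail estimate above applies verbatim.
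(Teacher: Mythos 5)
Your first paragraph and the Case 1 analysis are essentially correct and mirror the paper's opening move: a type-I component with attached type-V chains is a path whose two endpoints are tails; letting $u$ be the heavier endpoint and attaching an external edge to $F$ at $u$ leaves only $v$ as an uncounted leaf, and $w(C\cap M^*)\ge 2w(u)+2w(v)\ge 4w(v)\ge 3w(v)$ gives the $\tfrac 23$-bound. So far, so good.

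The gap is in the remaining case, which you only sketch and which is in fact the bulk of the proof. You propose a dichotomy: either rerouting produces a spanning tree of $C$ whose \emph{external} leaf carries the $F$-edge, or Operation~\ref{op1} kills the configuration outright as a hanging triangle/diamond/pentagon. But a third outcome, which the paper uses in the majority of its subcases, is missing: one reroutes $C$ into a path \emph{with no external edge at all} whose two leaves are jointly light enough. For example, in the paper's Case 4 (the farthest in-$C$ neighbour of $b^1$ is $a_{j_1}$), one compares leaving $C$ alone (leaves $b^1,b^2$) with swapping $\{a_{j_1},b^1\}$ in and $\{a_{j_1},b_{j_1}\}$ out (leaves $b_{j_1},b^2$); the better of the two has internal weight at least $a^2+a_{j_1}+a^1+\max\{b^1,b_{j_1}\}\ge 2(b^2+\min\{b^1,b_{j_1}\})$, and no edge ever goes to $F$. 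Several further subcases (5, 6.1, 7.1, 7.2) are settled the same way, and still others (2.2, 3.2, 8--10) invoke Operation~\ref{op1} not to derive a contradiction but only to extract an extra neighbour of a head vertex, which may itself lie inside $C$ and lead to yet another internal rewiring (sometimes creating a cycle and deleting the lightest $M^*$-edge on it). Your claim that ``the Operation~\ref{op1} exclusions rule out exactly those subcases in which no vertex with an external edge can be turned into a leaf'' is therefore not what happens, and the expectation that the enumeration will be ``very short'' and that ``the head-at-least-tail estimate applies verbatim'' underestimates what is needed: the paper's proof of this lemma alone runs through ten cases keyed to the position of the farthest neighbour $v$ of $b^1$ along the path, with distinct rewiring and weight estimates in each. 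Until that case analysis is actually carried out, the proposal has a genuine gap.
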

\begin{proof}
Consider a type-I component $C$ in the subgraph $H_3$.
Recall from Lemma~\ref{lemma08} and the paragraph right above it,
that a general type-I component is an original type-I component (shown in Figure~\ref{fig02a}) augmented with zero to two type-V components.

Let the two original edges of $M^*$ in $C$ be $e_{j_1}$ and $e_{j_2}$ and the two tail vertices be $b^1$ and $b^2$
(which replace $b_{j_1}$ and $b_{j_2}$ to be the tail vertices, respectively) with $w(b^1) \ge w(b^2)$.
The corresponding two head vertices to the tails $b^1$ and $b^2$ are denoted as $a^1$ and $a^2$, respectively.
See Figure~\ref{fig05} for the general configuration of such a component.
We assume that $w(b^1) > 0$, since otherwise $C$ is settled automatically.

\begin{figure}[htb]
\begin{center}
\includegraphics[angle=0,width=0.5\textwidth]{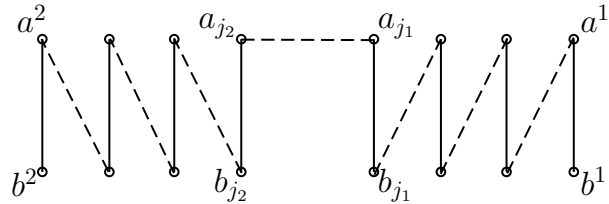}
\end{center}
\caption{The general configuration of a type-I component, with a type-V component adjacent to each of the two tail vertices.\label{fig05}}
\end{figure}

\paragraph{Case 1.}
If $b^1$ is adjacent to a vertex $v$ outside $C$, then we add the edge $\{b^1, v\}$ to $F$;
this settles $C$, since the total weight of the internal vertices in $C$ is at least $a^1 + b^1 + a^2 \ge 3 b^2$
(recall that a vertex notation here represents the weight of the vertex).

We next consider the case where $b^1$ is not adjacent to any vertex outside $C$, and thus it has to be adjacent to some vertex inside $C$.
Note that $C$ is a path with $b^1$ and $b^2$ being its two ending vertices.
Let $v$ denote the vertex adjacent to $b^1$ that is the \underline{\em farthest} to $b^1$ on $C$.
We distinguish this distance $d_C(b^1, v) \ge 2$ and where $v$ locates.

\paragraph{Case 2.}
If $b^1 \ne b_{j_1}$ and $v$ is in the type-V component containing $b^1$,
then by the construction of a type-V component we know that $v$ must be a tail of an edge of $M^*$ (Figures~\ref{fig06a} and \ref{fig06b}),
and thus $d_C(b^1, v)$ is even.
Denote this edge as $e_{j_3} = \{a_{j_3}, b_{j_3}\}$, that is $v = b_{j_3}$.

\begin{figure}[htb]
\begin{center}
\begin{subfigure}[b]{0.45\textwidth}
\includegraphics[width=\textwidth]{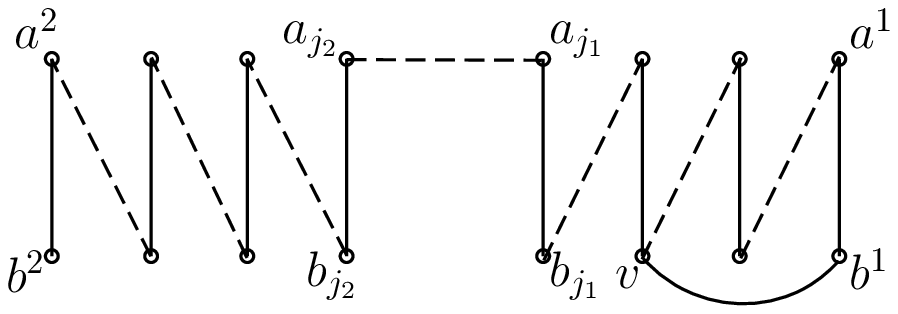}
\caption{$d_C(b^1, v) \ge 4$.\label{fig06a}}
\end{subfigure}%
\hspace{0.04\textwidth}
\begin{subfigure}[b]{0.45\textwidth}
\includegraphics[width=\textwidth]{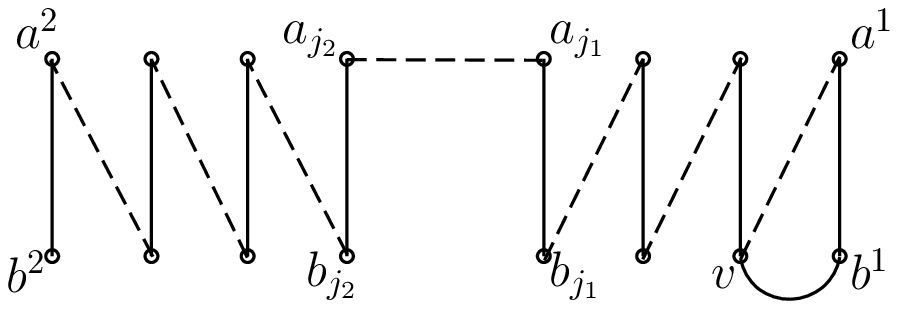}
\caption{$d_C(b^1, v) = 2$.\label{fig06b}}
\end{subfigure}%
\end{center}
\caption{Local configurations corresponding to Case 2, where $v$ is inside the type-V component adjacent to $b_{j_1}$.\label{fig06}}
\end{figure}

{\bf Case 2.1.}
If $d_C(b^1, v) \ge 4$, then denote the head vertex other than $a_{j_3}$ that $b_{j_3}$ is adjacent to as $a_{j_4}$ (see Figure~\ref{fig06a}).
We conclude from the claw-free property that there must be at least an edge among $a_{j_3}, a_{j_4}, b^1$, which contradicts the identity of the type-V component.
Therefore, it is impossible to have $d_C(b^1, v) \ge 4$.

{\bf Case 2.2.}
If $d_C(b^1, v) = 2$, then we conclude that $d_G(b^1) = 2$ and thus $d_G(a^1) \ge 3$ by Operation~\ref{op1} (see Figure~\ref{fig06b}),
i.e. there is at least another edge incident at $a^1$ besides $\{a^1, b_{j_3}\}$ and $\{a^1, b^1\}$.
Denote this neighbor of $a^1$ as $u$.
If $u$ is inside $C$, then $u = b^2$;
in this case, add the edges $\{b^1, b_{j_3}\}$ and $\{a^1, b^2\}$ to $C$
while delete the edge $\{a^1, b_{j_3}\}$ and the lightest among the edges of $C \cap M^*$ from $C$.
This way, the component becomes a tree and thus $C$ is settled.
If $u$ is outside $C$, then we add the edge $\{b^1, b_{j_3}\}$ to $C$ while delete the edge $\{a^1, b_{j_3}\}$ from $C$,
and add the edge $\{a^1, u\}$ to $F$;
this way, the component becomes a tree and thus $C$ is settled.

\paragraph{Case 3.}
If $b^1 \ne b_{j_1}$ and $v = b_{j_1}$, we consider the size of the type-V component containing $b^1$ (see Figure~\ref{fig07}).

\begin{figure}[htb]
\begin{center}
\includegraphics[width=0.45\textwidth]{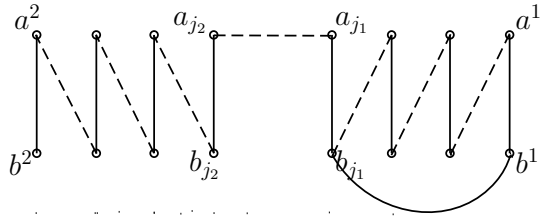}
\end{center}
\caption{Local configurations corresponding to Case 3, where $v = b_{j_1}$.\label{fig07}}
\end{figure}

{\bf Case 3.1.}
If this type-V component contains more than one edge of $M^*$, then by the claw-free property $a_{j_1}$ must be adjacent to $b^1$,
which violates the definition of $v$ being the farthest and thus it is impossible.

{\bf Case 3.2.}
If the type-V component containing $b^1$ has only one edge of $M^*$, which is $\{a^1, b^1\}$,
then we have $d_G(b^1) = 2$, and thus $d_G(a^1) \ge 3$ by Operation~\ref{op1},
i.e. there is at least another edge incident at $a^1$ besides $\{a^1, b_{j_1}\}$ and $\{a^1, b^1\}$.
(The following is the same as in Case 2.2.)
Denote this neighbor of $a^1$ as $u$.
If $u$ is inside $C$, then $u = b^2$;
in this case, add the edges $\{b^1, b_{j_1}\}$ and $\{a^1, b^2\}$ to $C$
while delete the edge $\{a^1, b_{j_1}\}$ and the lightest among the edges of $C \cap M^*$ from $C$.
This way, the component becomes a tree and thus $C$ is settled.
If $u$ is outside $C$, then we add the edge $\{b^1, b_{j_1}\}$ to $C$ while delete the edge $\{a^1, b_{j_1}\}$ from $C$,
and add the edge $\{a^1, u\}$ to $F$;
this way, the component becomes a tree and thus $C$ is settled.

\paragraph{Case 4.}
If $b^1 \ne b_{j_1}$ and $v = a_{j_1}$ (see Figure~\ref{fig08}), then we leave $C$ as it is when $w(b^1) \le w(b_{j_1})$,
or we add the edge $\{a_{j_1}, b^1\}$ to $C$ while delete the edge $\{a_{j_1}, b_{j_1}\}$ from $C$.
The total weight of the internal vertices in the resulting path is at least
$a^2 + a_{j_1} + a^1 + \max\{b^1, b_{j_1}\} \ge 2 (b^2 + \min\{b^1, b_{j_1}\})$,
and thus $C$ is settled.

\begin{figure}[htb]
\begin{center}
\includegraphics[width=0.45\textwidth]{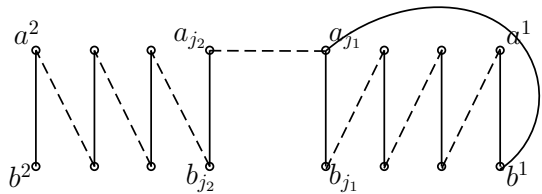}
\end{center}
\caption{Local configurations corresponding to Case 4, where $v = a_{j_1}$.\label{fig08}}
\end{figure}

\paragraph{Case 5.}
If $b^1 \ne b_{j_1}$ and $v = a_{j_2}$ (see Figure~\ref{fig09}), then by the claw-free property there is at least an edge among $b_{j_2}, a_{j_1}, b^1$.
Note that $b_{j_2}$ and $b^1$ cannot be adjacent due to the definition of the vertex $v$ (being the farthest).
If $a_{j_1}$ and $b^1$ are adjacent, then it has been proven in Case 4 that $C$ can be settled.
If $b_{j_2}$ and $a_{j_1}$ are adjacent, then similarly as in Case 4 we either leave $C$ as it is when $w(b^1) \le w(b_{j_1})$,
or add the edges $\{a_{j_2}, b^1\}$ and $\{a_{j_1}, b_{j_2}\}$ to $C$ while delete the edges $\{a_{j_1}, b_{j_1}\}$ and $\{a_{j_2}, b_{j_2}\}$ from $C$;
the total weight of the internal vertices in the resulting path is at least
$a^2 + a_{j_1} + a^1 + \max\{b^1, b_{j_1}\} \ge 2 (b^2 + \min\{b^1, b_{j_1}\})$,
and thus $C$ is settled.

\begin{figure}[htb]
\begin{center}
\includegraphics[width=0.45\textwidth]{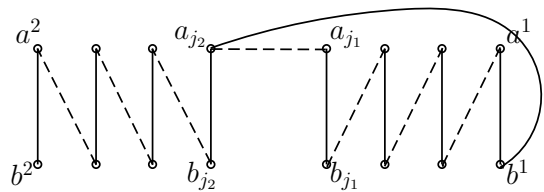}
\end{center}
\caption{Local configurations corresponding to Case 5, where $v = a_{j_2}$.\label{fig09}}
\end{figure}

\paragraph{Case 6.}
If $b^1 \ne b_{j_1}$ and $v = b_{j_2}$ (see Figure~\ref{fig10}), then by the claw-free property $a_{j_2}$ and $b^1$ must be adjacent.

\begin{figure}[htb]
\begin{center}
\includegraphics[width=0.45\textwidth]{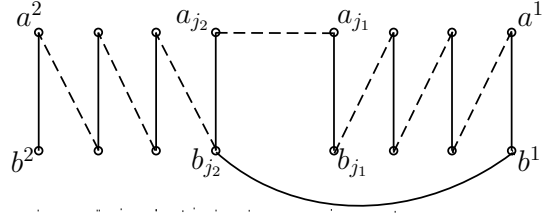}
\end{center}
\caption{Local configurations corresponding to Case 6, where $v = b_{j_2}$.\label{fig10}}
\end{figure}

{\bf Case 6.1.}
If $b^2 \ne b_{j_2}$, then we have three ways to convert $C$ into a path ending at $b^2$:
1) doing nothing to leave $b^1$ as a leaf;
2) adding the edge $\{a_{j_2}, b^1\}$ to $C$ while deleting the edge $\{a_{j_2}, a_{j_1}\}$ to leave $a_{j_1}$ as a leaf;
3) adding the edge $\{b_{j_2}, b^1\}$ to $C$ while deleting the edge $\{a_{j_2}, b_{j_2}\}$ to leave $a_{j_2}$ as a leaf.
Then, the maximum total weight of the internal vertices among these three paths is at least
$a^2 + a^1 + a_{j_1} + a_{j_2} + b^1 - \min\{b^1, a_{j_1}, a_{j_2}\} \ge 2(b^2 + \min\{b^1, a_{j_1}, a_{j_2}\})$.
Thus, $C$ is settled.

{\bf Case 6.2.}
If $b^2 = b_{j_2}$, that is, there is no type-V component adjacent to $b_{j_2}$,
then we add the edge $\{b_{j_2}, b^1\}$ to $C$ while delete the lightest edge of $C \cap M^*$ from $C$ to settle $C$,
because $C \cap M^*$ contains at least three edges.

\paragraph{Case 7.}
If $b^1 \ne b_{j_1}$, $b^2 \ne b_{j_2}$, and $v$ is in the type-V component containing $b^2$,
we distinguish whether $v$ is a head or a tail (see Figure~\ref{fig11}).

\begin{figure}[htb]
\begin{center}
\begin{subfigure}[b]{0.32\textwidth}
\includegraphics[width=\textwidth]{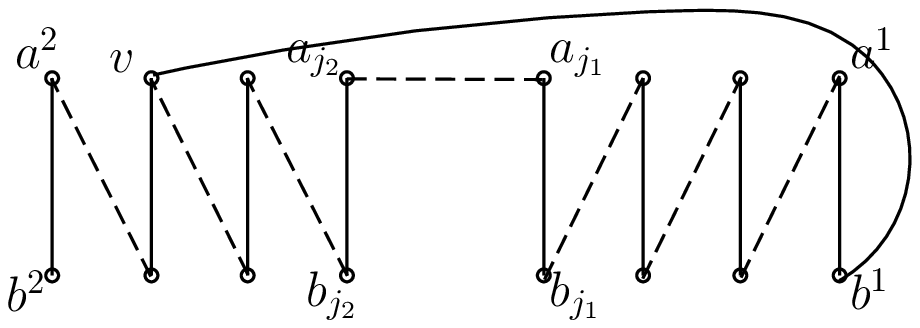}
\caption{$v$ is a head.\label{fig11a}}
\end{subfigure}%
\hspace{0.01\textwidth}
\begin{subfigure}[b]{0.32\textwidth}
\includegraphics[width=\textwidth]{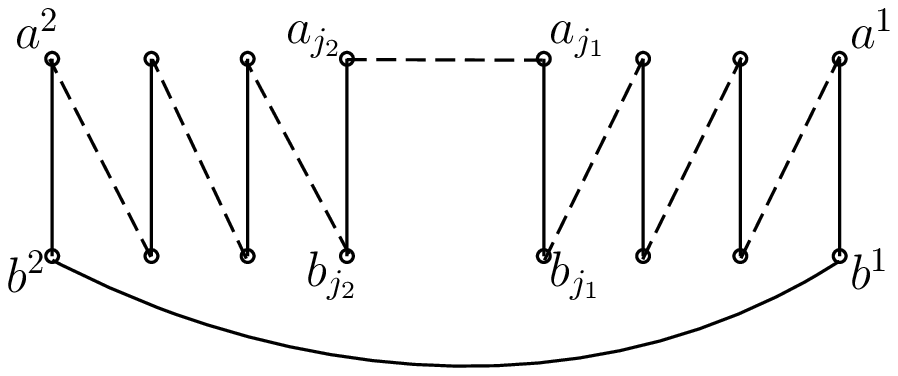}
\caption{$v = b^2$.\label{fig11b}}
\end{subfigure}%
\hspace{0.01\textwidth}
\begin{subfigure}[b]{0.32\textwidth}
\includegraphics[width=0.95\textwidth]{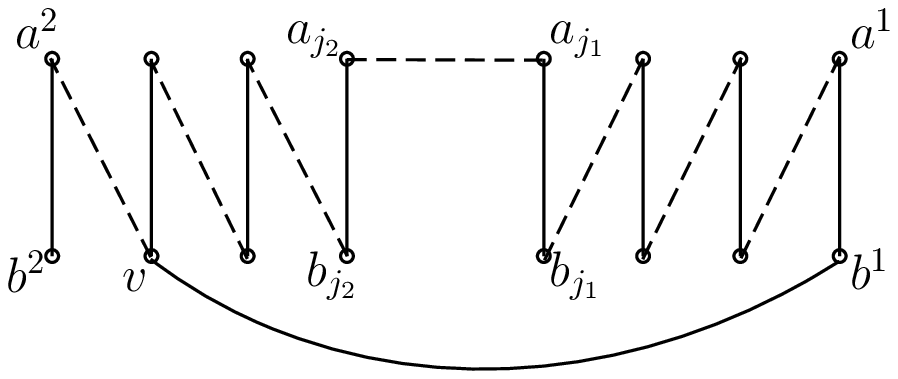}
\caption{$v$ is a tail.\label{fig11c}}
\end{subfigure}%
\end{center}
\caption{Local configurations corresponding to Case 7, where $v$ is inside the type-V component adjacent to $b_{j_2}$.\label{fig11}}
\end{figure}

{\bf Case 7.1.}
If $v$ is a head, say $a_{j_3}$ of the edge $e_{j_3}$ (see Figure~\ref{fig11a}), and assume that $a_{j_3}$ is also adjacent to $b_{j_4}$,
then we do nothing to $C$ to leave $b^1$ as a leaf when $w(b^1) \le w(b_{j_4})$,
or add the edge $\{a_{j_3}, b^1\}$ to $C$ while delete the edge $\{a_{j_3}, b_{j_4}\}$ from $C$ to leave $b_{j_4}$ as a leaf.
Then, the total weight of the internal vertices in the resulting path is at least
$a^2 + a^1 + a_{j_4} + \max\{b^1, b_{j_4}\} \ge 2(b^2 + \min\{b^1, b_{j_4}\})$.
Thus, $C$ is settled.

{\bf Case 7.2.}
If $v$ is a tail, say $b_{j_3}$ of the edge $e_{j_3}$ (see Figures~\ref{fig11b} and \ref{fig11c}).
If $b^2 = b_{j_3}$ (see Figure~\ref{fig11b}),
then we add the edge $\{b^1, b^2\}$ to $C$ makes it a cycle while delete the lightest edge of $C \cap M^*$ to settle $C$
since there are at least four edges in $C \cap M^*$;
if $b^2 \ne b_{j_3}$ (see Figure~\ref{fig11c}), then by the claw-free property and the definition of $v$ we conclude that $b^1$ is adjacent to $a_{j_3}$ too,
and thus the argument in Case 7.1 applies to settle $C$.

In summary, Cases 2-7 together prove that when $b^1 \ne b_{j_1}$, the component $C$ can be settled.
We next consider the situation where $b^1 = b_{j_1}$, that is, there is no type-V component adjacent to $b_{j_1}$ (see Figure~\ref{fig12}).

\begin{figure}[htb]
\begin{center}
\includegraphics[width=0.35\textwidth]{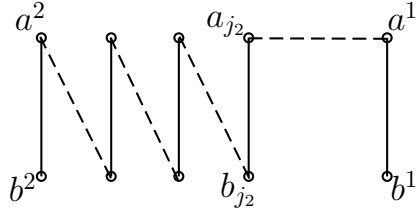}
\end{center}
\caption{A special configuration without a type-V component adjacent to $b_{j_1}$, i.e. $b_{j_1} = b^1$.\label{fig12}}
\end{figure}

\paragraph{Case 8.}
If $b^1 = b_{j_1}$ and $v = a_{j_2}$, then we conclude that $d_G(b^1) = 2$ and thus $d_G(a^1) \ge 3$ by Operation~\ref{op1} (see Figure~\ref{fig13}).

\begin{figure}[htb]
\begin{center}
\includegraphics[width=0.35\textwidth]{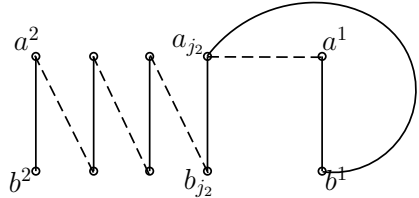}
\end{center}
\caption{Local configurations corresponding to Case 8, where $v = a_{j_2}$.\label{fig13}}
\end{figure}

{\bf Case 8.1.}
If there is a vertex $u$ outside $C$ that is adjacent to $a^1$, then we add the edge $\{a^1, u\}$ to $F$,
add the edge $\{a_{j_2}, b^1\}$ to $C$ while delete the edge $\{a^1, a_{j_2}\}$ from $C$;
this way, $C$ is settled.

{\bf Case 8.2.}
Otherwise by the claw-free property $a^1$ must be adjacent either to $b^2$ or to $b_{j_2}$.

{\bf Case 8.2.1.}
Assuming $b^2 \ne b_{j_2}$, in the former case, we add the edges $\{a_{j_2}, b^1\}$ and $\{a^1, b^2\}$ to $C$
while delete the edge $\{a^1, a_{j_2}\}$ and the lightest edge of $C \cap M^*$ from $C$;
this way, $C$ is settled since there were at least three edges in $C \cap M^*$.
In the latter case, we conclude that $d_G(a_{j_2}) \ge 4$ by Operation~\ref{op1}.
Recursively, if there is a vertex $u$ outside $C$ that is adjacent to $a_{j_2}$, then we add the edge $\{a_{j_2}, u\}$ to $F$,
add the edges $\{a_{j_2}, b^1\}$ and $\{a^1, b_{j_2}\}$ to $C$ while delete the edges $\{a^1, a_{j_2}\}$ and $\{a_{j_2}, b_{j_2}\}$ from $C$;
this way, $C$ is settled.
Otherwise by the claw-free property $a_{j_2}$ must be adjacent to $b^2$;
we add three edges $\{a_{j_2}, b^1\}$, $\{a^1, b_{j_2}\}$ and $\{a_{j_2}, b^2\}$ to $C$
while delete the edges $\{a^1, a_{j_2}\}$ and $\{a_{j_2}, b_{j_2}\}$, and the lightest edge of $C \cap M^*$ from $C$;
this way, $C$ is settled.

{\bf Case 8.2.2.}
Assuming $b^2 = b_{j_2}$, that is, there is no type-V component adjacent to $b_{j_2}$.
We conclude that $d_G(a^2) \ge 4$ and there is a vertex $u$ outside $C$ that is adjacent to $a^2$ by Operation~\ref{op1}.
Thus, we add the edge $\{a^2, u\}$ to $F$,
add the edges $\{a^2, b^1\}$ and $\{a^1, b^2\}$ to $C$ while delete the edges $\{a^1, a^2\}$ and $\{a^2, b^2\}$ from $C$;
this way, $C$ is settled.

\paragraph{Case 9.}
If $b^1 = b_{j_1}$ and $v = b_{j_2}$ (see Figure~\ref{fig14}), we consider two possible scenarios.

\begin{figure}[htb]
\begin{center}
\begin{subfigure}[b]{0.4\textwidth}
\includegraphics[width=0.85\textwidth]{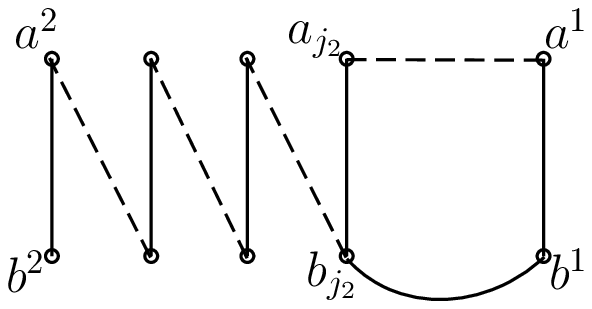}
\caption{$b^2 \ne b_{j_2}$.\label{fig14a}}
\end{subfigure}%
\hspace{0.04\textwidth}
\begin{subfigure}[b]{0.4\textwidth}
\includegraphics[width=0.45\textwidth]{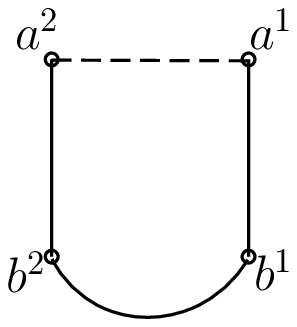}
\caption{$b^2 = b_{j_2}$.\label{fig14b}}
\end{subfigure}%
\end{center}
\caption{Local configurations corresponding to Case 9, where $v = b_{j_2}$.\label{fig14}}
\end{figure}

{\bf Case 9.1.}
If $b^2 \ne b_{j_2}$ (see Figure~\ref{fig14a}), then we conclude from the claw-free property and the definition of $v$ that $b^1$ is also adjacent to $a_{j_2}$.
By Operation~\ref{op1}, at least one of $a^1$ and $a_{j_2}$ must be adjacent to another vertex $u$.

{\bf Case 9.1.1.}
If there is a vertex $u$ outside $C$ that is adjacent to $a^1$ ($a_{j_2}$, respectively)
then we add the edge $\{a^1, u\}$ ($\{a_{j_2}, u\}$, respectively) to $F$,
add the edges $\{b_{j_2}, b^1\}$ and $\{a_{j_2}, b^1\}$ ($\{b_{j_2}, b^1\}$, respectively) to $C$
while delete the edges $\{a^1, b^1\}$ and $\{a_{j_2}, b_{j_2}\}$ ($\{a_{j_2}, b_{j_2}\}$, respectively) from $C$;
this way, $C$ is settled.

{\bf Case 9.1.2.}
Otherwise by the claw-free property $u \in \{b^2, b_{j_2}\}$.
If $u = b^2$, then we can settle $C$ by converting $C$ into a cycle on the same set of vertices, followed by deleting the lightest edge of $C \cap M^*$ from $C$.
In the other case, $u = b_{j_2}$ and thus $a^1$ is adjacent to $b_{j_2}$, which by Operation~\ref{op1} is impossible.

{\bf Case 9.2.}
If $b^2 = b_{j_2}$, that is, there is no type-V component adjacent to $b_{j_2}$ (see Figure~\ref{fig14b}).
We conclude that either there is a vertex $u$ outside $C$ that is adjacent to $a^2$,
or $a^2$ is adjacent to $b^1$ and there is a vertex $u$ outside $C$ that is adjacent to $a^1$.
In either case, we add the edge $\{\cdot, u\}$ to $F$, and convert $C$ into a path with $b^2$ and $\cdot$ as two ending vertices;
this way, $C$ is settled.

\paragraph{Case 10.}
If $b^1 = b_{j_1}$, $b^2 \ne b_{j_2}$, and $v$ is in the type-V component containing $b^2$, we distinguish whether $v$ is a head or a tail (see Figure~\ref{fig15}).

\begin{figure}[htb]
\begin{center}
\begin{subfigure}[b]{0.32\textwidth}
\includegraphics[width=\textwidth]{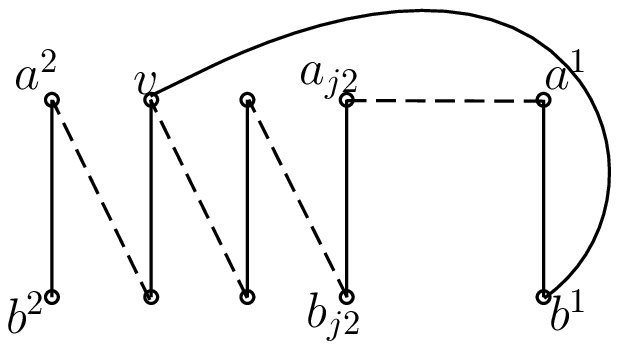}
\caption{$v$ is a head.\label{fig15a}}
\end{subfigure}%
\hspace{0.01\textwidth}
\begin{subfigure}[b]{0.32\textwidth}
\includegraphics[width=\textwidth]{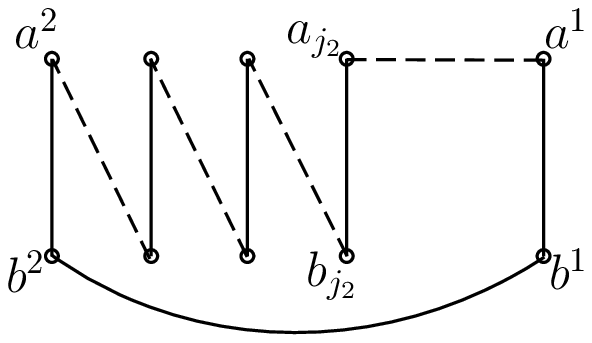}
\caption{$v = b^2$.\label{fig15b}}
\end{subfigure}%
\hspace{0.01\textwidth}
\begin{subfigure}[b]{0.32\textwidth}
\includegraphics[width=0.95\textwidth]{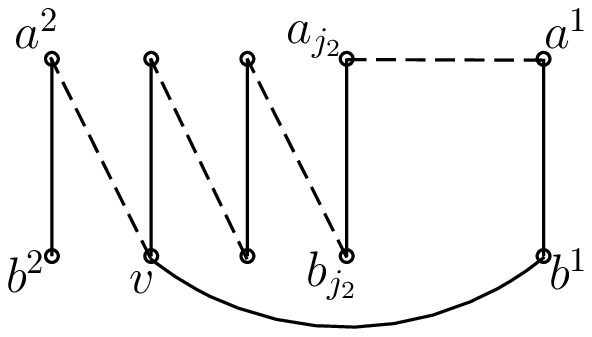}
\caption{$v$ is a tail.\label{fig15c}}
\end{subfigure}%
\end{center}
\caption{Local configurations corresponding to Case 10, where $v$ is inside the type-V component adjacent to $b_{j_2}$.\label{fig15}}
\end{figure}

{\bf Case 10.1.}
If $v$ is a head, say $a_{j_3}$ of the edge $e_{j_3}$ (see Figure~\ref{fig15a}), and assume that $a_{j_3}$ is also adjacent to $b_{j_4}$,
then we do nothing to $C$ to leave $b^1$ as a leaf when $w(b^1) \le w(b_{j_4})$,
or add the edge $\{a_{j_3}, b^1\}$ to $C$ while delete the edge $\{a_{j_3}, b_{j_4}\}$ from $C$ to leave $b_{j_4}$ as a leaf.
Then, the total weight of the internal vertices in the resulting path is at least
$a^2 + a^1 + a_{j_4} + \max\{b^1, b_{j_4}\} \ge 2(b^2 + \min\{b^1, b_{j_4}\})$.
Thus, $C$ is settled.

{\bf Case 10.2.}
If $v$ is a tail, say $b_{j_3}$ of the edge $e_{j_3}$ (see Figures~\ref{fig15b} and \ref{fig15c}).
If $b^2 = b_{j_3}$ (see Figure~\ref{fig15b}),
then we add the edge $\{b^1, b^2\}$ to $C$ makes it a cycle while delete the lightest edge of $C \cap M^*$ to settle $C$
since there are at least three edges in $C \cap M^*$;
if $b^2 \ne b_{j_3}$ (see Figure~\ref{fig15c}), then by the claw-free property and the definition of $v$ we conclude that $b^1$ is adjacent to $a_{j_3}$ too,
and thus the argument in Case 10.1 applies to settle $C$.

All possible cases have be discussed in the above.
The lemma is proven.
\end{proof}

\begin{lemma}
\label{lemma10}
A type-II component in the subgraph $H_3$ can be settled.
\end{lemma}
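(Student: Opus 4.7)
The plan is to follow the same template as the proof of Lemma~\ref{lemma09}, now adapted to the three-head spine of a type-II component. Recall that a general type-II $C$ in $H_3$ consists of an $M^{aa}\cup N^{aa}$-path $a_{j_1}-a_{j_2}-a_{j_3}$, each head $a_{j_i}$ carrying its $M^*$-edge $e_{j_i}=\{a_{j_i},b_{j_i}\}$, and possibly at each tail $b_{j_i}$ an attached type-V chain whose degree-$1$ tail I denote by $b^i$ (with $b^i=b_{j_i}$ when no chain is attached). The goal is to convert $C$, possibly plus one $F$-edge, into a spanning tree on its vertex set whose internal weight (plus that of a specified leaf, if any) is at least $\frac{2}{3}\,w(C\cap M^*)$.

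First I would handle the easy ``escape'' case: letting $b^1$ denote an outer tail of maximum weight among $\{b^1,b^2,b^3\}$ by renaming, if $b^1$ has some neighbor $u$ outside $C$, then I would add $\{b^1,u\}$ to $F$ and leave $C$ as the natural caterpillar tree on its vertex set. Since the three heads are already internal in that caterpillar (as are $b_{j_i}$ and every interior vertex of any attached chain), the certified weight is at least $a_{j_1}+a_{j_2}+a_{j_3}+b^1$ plus the internal chain weight. Using $a_{j_i}\ge b_{j_i}$ and $b^1\ge\frac{1}{3}(b^1+b^2+b^3)$, this dominates $\frac{2}{3}\,w(C\cap M^*)$ by the same one-line bookkeeping used in Lemma~\ref{lemma09}.

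Otherwise, every neighbor of $b^1$ lies in $C$, and I would let $v$ be the one farthest from $b^1$ along $C$ and branch on its location: inside the type-V chain attached at $b^1$; at one of $a_{j_1},b_{j_1},a_{j_2},b_{j_2},a_{j_3},b_{j_3}$; or inside the chain attached at $b^3$. In each branch, as in Cases 2--10 of Lemma~\ref{lemma09}, I would combine the claw-free property with Operation~\ref{op1} (which forbids hanging triangles/diamonds/pentagons) to either extract a further neighbor outside $C$ for $F$, or to expose a ``shortcut'' edge inside $C$. I would then perform a local edge-exchange: add one or two new edges, form a cycle, and delete its lightest $M^*$-edge; the resulting tree's internal weight (plus any specified leaf) is checked against $\frac{2}{3}\,w(C\cap M^*)$ via the workhorse inequality $w_1+w_2+w_3-\min\{w_1,w_2,w_3\}\ge 2\min\{w_1,w_2,w_3\}$ already exploited in Lemma~\ref{lemma09}.

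The main obstacle, as for type-I, is sheer bookkeeping: the spine now has three cells, so both the number of positions for $v$ and the combinations of chains being present or absent at each of $b_{j_1},b_{j_2},b_{j_3}$ roughly double the number of sub-cases relative to Lemma~\ref{lemma09}. The hardest sub-case is expected to be when $v$ lies in the middle cell around $\{a_{j_2},b_{j_2}\}$ and no simpler shortcut is available; there I would need to chain a claw at $a_{j_2}$ (producing an adjacency among $a_{j_1},b_{j_2},a_{j_3}$) with Operation~\ref{op1} applied to a low-degree vertex to force either an $F$-edge or an internal shortcut that lets us swap out $e_{j_2}$ and absorb $b_{j_2}$ as internal. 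The remaining sub-cases should follow by arguments essentially symmetric to those in Lemma~\ref{lemma09}.
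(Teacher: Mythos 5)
Your plan correctly identifies the template the paper follows for Lemma~\ref{lemma10}: escape via the heaviest tail if it has a neighbor outside $C$, otherwise let $v$ be the farthest neighbor of that tail on $C$ and branch on where $v$ sits, then use claw-freeness together with Operation~\ref{op1} to force either an $F$-edge out of $C$ or an internal shortcut that permits a settling edge-exchange. However, what you have written is a plan, not a proof. Virtually the entire substance of the lemma is the detailed case analysis you defer with phrases like ``I would combine the claw-free property with Operation~\ref{op1}'' and ``the remaining sub-cases should follow by arguments essentially symmetric to those in Lemma~\ref{lemma09}.'' The paper's proof runs through Cases~2--10 with many sub-cases (2.1, 2.2, 3.1, 3.2, 4.1, 4.2.1.1--4.2.2.3, 5, 6, 7.1.1--7.2.2, 8.1--8.2.2, 9, 10.1--10.2.2), each specifying concrete edge exchanges and verifying a concrete weight bound. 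Nothing in your proposal establishes that every one of these branches closes — and your own assessment (``the hardest sub-case is expected to be when $v$ lies in the middle cell'') concedes that at least one nontrivial branch is unresolved.

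There is also a structural gap in your very first move. You say ``letting $b^1$ denote an outer tail of maximum weight among $\{b^1,b^2,b^3\}$ by renaming,'' but that renaming is not available: the three arms of a type-II component are not interchangeable. Two of the tails hang off the two end heads $a_{j_1}, a_{j_3}$ of the spine $a_{j_1}-a_{j_2}-a_{j_3}$, while the third hangs off the \emph{middle} head $a_{j_2}$. If the heaviest tail is the one attached to $a_{j_2}$, there is no relabeling that makes it an end tail, and your ``escape'' inequality — which uses $b^1 \ge \tfrac{1}{3}(b^1+b^2+b^3)$ — does not follow. The paper explicitly acknowledges this asymmetry (``the other situation is where $b^2$ is the heaviest, and can be similarly discussed''); your proposal silently assumes it away. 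In addition, your Case~1 accounting using $a_{j_1}+a_{j_2}+a_{j_3}+b^1$ ``plus internal chain weight'' is looser than what is needed: the paper charges the leaf $b^i$ against the \emph{chain-end} head $a^i$ (the head adjacent to $b^i$ through its own $M^*$-edge), not the spine heads, and this matters when type-V chains are attached.
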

\begin{proof}
Consider a type-II component $C$ in the subgraph $H_3$.
Recall from Lemma~\ref{lemma08} and the paragraph right above it,
that a general type-II component is an original type-II component (shown in Figure~\ref{fig02b}) augmented with zero to three type-V components.

Let the three original edges of $M^*$ in $C$ be $e_{j_1}, e_{j_2}, e_{j_3}$, with $e_{j_2}$ in the middle (see Figure~\ref{fig02b}),
and the three tail vertices be $b^1, b^2, b^3$
(which replace $b_{j_1}, b_{j_2}, b_{j_3}$, respectively).
We consider the situation where $b^1$ is the heaviest among the three tail vertices
(the other situation is where $b^2$ is the heaviest, and can be similarly discussed).
In the following, discussion for most cases is similar to the cases in the proof of Lemma~\ref{lemma09},
and thus not all details are presented (neither the illustration figures).

\paragraph{Case 1.}
If $b^1$ is adjacent to a vertex $v$ outside $C$, then we add the edge $\{b^1, v\}$ to $F$, which settles $C$,
since the total weight of the internal vertices in $C$ is at least $a^1 + b^1 + a^2 + a^3 \ge 2 (b^2 + b^3)$
(recall that a vertex notation here represents the weight of the vertex).

In the sequel we assume $b^1$ is not adjacent to any vertex outside $C$, and thus it has to be adjacent to some vertex inside $C$.
Let $v$ denote the vertex adjacent to $b^1$ that is the \underline{\em farthest} to $b^1$ on $C$ (tie breaks arbitrarily).
We distinguish this distance $d_C(b^1, v) \ge 2$ and where $v$ locates.

\paragraph{Case 2.}
If $b^1 \ne b_{j_1}$ and $v$ is in the type-V component containing $b^1$, then $v$ must be the tail of an edge of $M^*$ and thus $d_C(b^1, v)$ is even.
Denote this edge as $e_{j_4} = \{a_{j_4}, b_{j_4}\}$, that is $v = b_{j_4}$.

{\bf Case 2.1.}
If $d_C(b^1, v) \ge 4$, then denote the head vertex other than $a_{j_4}$ that $b_{j_4}$ is also adjacent to as $a_{j_5}$.
We conclude from the claw-free property that there must be at least an edge among $a_{j_4}, a_{j_5}, b^1$, which contradicts the identity of the type-V component.
Thus it is impossible to have $d_C(b^1, v) \ge 4$.

{\bf Case 2.2.}
If $d_C(b^1, v) = 2$, then we conclude that $d_G(b^1) = 2$ and thus $d_G(a^1) \ge 3$ by Operation~\ref{op1},
i.e. there is at least another edge incident at $a^1$ besides $\{a^1, b_{j_4}\}$ and $\{a^1, b^1\}$.
Denote this neighbor of $a^1$ as $u$.
If $u$ is inside $C$, then $u \in \{b^2, b^3\}$;
in the case of $u = b^2$ (the argument for $u = b^3$ is identical),
we know that the branch incident at $a_{j_1}$ contains at least three edges of $M^*$ ($e_{j_1}, e_{j_4}, e^1$),
and thus we may add the edges $\{b_{j_4}, b^1\}$ and $\{a^1, b^2\}$ to $C$
while delete the edge $\{a^1, b_{j_4}\}$ and the lightest among the three edges $e_{j_1}, e_{j_4}, e^2$ (which are on the created cycle),
denoted as $e_x = \{a_x, b_x\}$, from $C$.
This way, the component becomes a tree with leaves $b^3, a_x, b_x$.
The total weight of the internal vertices in the tree is at least
$a^3 + (a^1 + b^1) + 2 (a_x + b_x) \ge 2 (b^3 + a_x + b_x)$;
that is, $C$ is settled.

If $u$ is outside $C$, then we add the edge $\{b_{j_4}, b^1\}$ to $C$ while delete the edge $\{a^1, b_{j_4}\}$ from $C$,
and add the edge $\{a^1, u\}$ to $F$;
this way, the component becomes a tree and thus $C$ is settled.

\paragraph{Case 3.}
If $b^1 \ne b_{j_1}$ and $v = b_{j_1}$, we consider the size of the type-V component containing $b^1$.

{\bf Case 3.1.}
If this type-V component contains more than one edge of $M^*$, then by the claw-free property $a_{j_1}$ must be adjacent to $b^1$,
which violates the definition of $v$ being the farthest and thus it is impossible.

{\bf Case 3.2.}
If the type-V component containing $b^1$ has only one edge of $M^*$, which is $\{a^1, b^1\}$;
in this case, we have $d_G(b^1) = 2$, and thus $d_G(a^1) \ge 3$ by Operation~\ref{op1},
i.e. there is at least another edge incident at $a^1$ besides $\{a^1, b_{j_1}\}$ and $\{a^1, b^1\}$.
Denote this neighbor of $a^1$ as $u$.
If $u$ is inside $C$, then $u \in \{b^2, b^3\}$.
In the case of $u = b^2$ (the argument for $u = b^3$ is identical),
we may add the edges $\{b^1, b_{j_1}\}$ and $\{a^1, b^2\}$ to $C$ while delete the edge $\{a^1, b_{j_1}\}$ to form a cycle.
When $b_{j_2} = b^2$, 
we either delete the edge $\{a_{j_2}, b_{j_2}\}$ to leave $b_{j_2}$ as a leaf,
or delete the edge $\{a_{j_2}, a_{j_1}\}$ to leave $a_{j_1}$ as a leaf.
The maximum total weight of the internal vertices between the two trees is at least
$a^3 + a^1 + b^1 + a^2 + \max\{b^2, a_{j_1}\} \ge 2 (b^3 + \min\{b^2, a_{j_1}\})$,
and thus $C$ is settled.
When $b_{j_2} \ne b^2$, we delete the lightest among the three edges $e_{j_1}, e_{j_2}, e^2$ (which are on the created cycle),
denoted as $e_x = \{a_x, b_x\}$, from $C$.
This way, the component becomes a tree with leaves $b^3, a_x, b_x$.
The total weight of the internal vertices in the tree is at least
$a^3 + (a^1 + b^1) + 2 (a_x + b_x) \ge 2 (b^3 + a_x + b_x)$;
that is, $C$ is settled.

If $u$ is outside $C$, then we add the edge $\{b^1, b_{j_1}\}$ to $C$ while delete the edge $\{a^1, b_{j_1}\}$ from $C$,
and add the edge $\{a^1, u\}$ to $F$;
this way, the component becomes a tree and thus $C$ is settled.

\paragraph{Case 4.}
If $b^1 \ne b_{j_1}$ and $v = a_{j_1}$, we consider the size of the type-V component containing $b^1$.

{\bf Case 4.1.} 
If this type-V component contains more than one edge of $M^*$, then denote one edge other than $\{a^1, b^1\}$ as $e_{j_4} = \{a_{j_4}, b_{j_4}\}$.
We add the edge $\{a_{j_1}, b^1\}$ to $C$ while either delete the edge $\{a_{j_1}, b_{j_1}\}$ to have a tree with leaves $b_{j_1}, b^2, b^3$,
or delete the edge $\{a_{j_4}, b_{j_4}\}$ to have a tree with leaves $a_{j_4}, b_{j_4}, b^2, b^3$.
It follows that the maximum total weight of the internal vertices between the two trees is at least
$a^2 + a^3 + a^1 + b^1 + a_{j_1} + \max\{b_{j_1}, a_{j_4} + b_{j_4}\} \ge 2 (b^2 + b^3 + \min\{b_{j_1}, a_{j_4} + b_{j_4}\})$;
therefore, $C$ is settled.

{\bf Case 4.2.} 
If this type-V component contains only one edge of $M^*$, which is $\{a^1, b^1\}$,
then from the claw-free property and the definition of $v$ being the farthest we conclude that $b_{j_1}$ is adjacent to at least one of $a_{j_2}$ and $b^1$.

{\bf Case 4.2.1.} 
Assume $b_{j_1}$ and $a_{j_2}$ are adjacent.
By treating $a_{j_2}$ as the cut-vertex in Operation~\ref{op1},
we conclude that at least one of the three vertices $a_{j_1}, b_{j_1}, a^1$ is adjacent to a vertex $u$ outside of the set $\{a_{j_2}, a_{j_1}, b_{j_1}, a^1, b^1\}$.

{\bf Case 4.2.1.1.} 
If $a_{j_1}$ is adjacent to a vertex $u$ outside the component $C$,
then we add the edges $\{a_{j_2}, b_{j_1}\}$ and $\{a_{j_1}, b^1\}$ to $C$ while delete the edges $\{a_{j_2}, a_{j_1}\}$ and $\{a_{j_1}, b_{j_1}\}$ from $C$
to obtain a tree with leaves $b^3, b^2, a_{j_1}$;
thus, adding the edge $\{a_{j_1}, u\}$ to $F$ settles $C$. 

If $b_{j_1}$ is adjacent to a vertex $u$ outside the component $C$,
then we add the edge $\{a_{j_1}, b^1\}$ to $C$ while delete the edge $\{a_{j_1}, b_{j_1}\}$ from $C$
to obtain a tree with leaves $b^3, b^2, b_{j_1}$;
thus, adding the edge $\{b_{j_1}, u\}$ to $F$ settles $C$. 

If $a^1$ is adjacent to a vertex $u$ outside the component $C$,
then we add the edges $\{a_{j_2}, b_{j_1}\}$ and $\{a_{j_1}, b^1\}$ to $C$ while delete the edges $\{a_{j_2}, a_{j_1}\}$ and $\{b_{j_1}, a^1\}$ from $C$
to obtain a tree with leaves $b^3, b^2, a^1$;
thus, adding the edge $\{a^1, u\}$ to $F$ settles $C$. 

{\bf Case 4.2.1.2.} 
If $a_{j_1}$ is not adjacent to any vertex outside the component $C$, but to some vertices inside $C$, then let $u$ denote the farthest neighbor on $C$
(tie breaks arbitrarily).
There are only four possibilities.

When $u = a_{j_3}$ ($u = b_{j_3}$ and $b_{j_3} \ne b^3$ implying that $a_{j_3}$ and $a_{j_1}$ are adjacent),
we can add the edges $\{a_{j_3}, a_{j_1}\}$, $\{a_{j_2}, b_{j_1}\}$ and $\{a_{j_1}, b^1\}$ to $C$
while delete the edges $\{a_{j_3}, a_{j_2}\}$, $\{a_{j_2}, a_{j_1}\}$ and $\{a_{j_1}, b_{j_1}\}$ from $C$ to obtain a path with leaves $b^3, b^2$;
thus, $C$ is settled. 

When $u = b_{j_2}$, we can add the edges $\{b_{j_2}, a_{j_1}\}$, $\{a_{j_2}, b_{j_1}\}$ and $\{a_{j_1}, b^1\}$ to $C$
while delete the edges $\{a_{j_2}, a_{j_1}\}$, $\{a_{j_2}, b_{j_2}\}$ and $\{a_{j_1}, b_{j_1}\}$ from $C$ to obtain a path with leaves $b^3, b^2$;
thus, $C$ is settled.

When $u = b^3$ (whether $b^3 = b_{j_3}$ or not), we can add the edge $\{b^3, a_{j_1}\}$ to $C$
while delete the edge $\{a_{j_2}, a_{j_1}\}$ from $C$ to obtain a path with leaves $b^2, b^1$;
we also can add the edges $\{b^3, a_{j_1}\}$ and $\{a_{j_1}, b^1\}$ to $C$
while delete the edges $\{a_{j_2}, a_{j_1}\}$ and $\{a_{j_1}, b_{j_1}\}$ from $C$ to obtain a path with leaves $b^2, b_{j_1}$.
It follows that the maximum total weight of the internal vertices between the two paths is at least
$a^2 + a_{j_1} + a^1 + \max\{b_{j_1}, b^1\} \ge 2 (b^2 + \min\{b_{j_1}, b^1\})$;
therefore, $C$ is settled.

When $u = b^2$, it can be shown the same as in the last paragraph by replacing $b^3$ with $b^2$, that $C$ can be settled.

{\bf Case 4.2.1.3.} 
If $a^1$ is not adjacent to any vertex outside the component $C$, but to some vertices inside $C$, then let $u$ denote the farthest neighbor on $C$
(tie breaks arbitrarily).
Note that $u$ cannot be any head vertex, as otherwise it violates the algorithm;
$u$ cannot be any tail vertex either, unless it is $b^2$ or $b^3$.
Therefore there are only two possibilities.

When $u = b^3$ (whether $b^3 = b_{j_3}$ or not), we can add the edges $\{b^3, a^1\}$ and $\{a_{j_1}, b^1\}$ to $C$ while delete the edge $\{a_{j_2}, a_{j_1}\}$.
Then, we either delete $\{a^1, b^1\}$ to obtain a path with leaves $b^2, b^1$,
or delete $\{a^1, b_{j_1}\}$ to obtain a path with leaves $b^2, b_{j_1}$.
It follows that the maximum total weight of the internal vertices between the two paths is at least
$a^2 + a_{j_1} + a^1 + \max\{b_{j_1}, b^1\} \ge 2 (b^2 + \min\{b_{j_1}, b^1\})$;
therefore, $C$ is settled.

When $u = b^2$, it can be shown the same as in the last paragraph by replacing $b^3$ with $b^2$, that $C$ can be settled.

{\bf Case 4.2.1.4.} 
If $b_{j_1}$ is not adjacent to any vertex outside the component $C$, but to some vertices inside $C$, then let $u$ denote the farthest neighbor on $C$
(tie breaks arbitrarily).

When $u \in \{a_{j_3}, b_{j_3}, b_{j_2}\}$ ($u = b_{j_3}$ and $b_{j_3} \ne b^3$ implying that $a_{j_3}$ and $b_{j_1}$ are adjacent),
similarly as in Case 4.2.1.2, we can convert $C$ into a path with leaves $b^3, b^2$ to settle $C$.
In the remaining case, $u$ is inside one of the attached type-V components.

When $u$ is a head, say $a_{j_4}$ of the edge $e_{j_4}$ in the type-V component attached to $b_{j_2}$ (or $b_{j_3}$),
and assume that $a_{j_4}$ is adjacent to $b_{j_5}$ besides $b_{j_4}$,
then we can do nothing to have a tree with leaves $b^3, b^2, b^1$;
or we can add the edge $\{a_{j_1}, b^1\}$ to $C$ while delete the edge $\{a_{j_1}, b_{j_1}\}$ to obtain a tree with leaves $b^3, b^2, b_{j_1}$;
or we can add the edges $\{a_{j_4}, b_{j_1}\}$ and $\{a_{j_1}, b^1\}$ to $C$
while delete the edges $\{a_{j_4}, b_{j_5}\}$ and $\{a_{j_1}, b_{j_1}\}$ to obtain a tree with leaves $b^3, b^2, b_{j_5}$.
The maximum total weight of the internal vertices among the three trees is at least
$a^3 + a^2 + a_{j_1} + b_{j_1} + a_{j_5} + b_{j_5} + a^1 + b^1 - \min\{b^1, b_{j_1}, b_{j_5}\} \ge 2 (b^3 + b^2 + \min\{b^1, b_{j_1}, b_{j_5}\})$,
which settles $C$.

When $u$ is a tail, say $b_{j_4}$ of the edge $e_{j_4}$ in the type-V component attached to $b_{j_2}$ (or $b_{j_3}$), but $b_{j_4} \notin \{b^2, b^3\}$,
then from the claw-free property and the definition of $u$ we conclude that $b_{j_1}$ is also adjacent to $a_{j_4}$.
Thus the argument in the last paragraph applies to settle $C$.

Lastly, when $u = b^2$ (or $b^3$, which can be shown in the same way), we conclude that
$b^2$ is adjacent either to $a_{j_1}$, which is settled in Case 4.2.1.2, or to $a^1$, which is settled in Case 4.2.1.3.

{\bf Case 4.2.2.} 
Assume $b_{j_1}$ is not adjacent to $a_{j_2}$ but to $b^1$.
By treating $a_{j_1}$ as the cut-vertex in Operation~\ref{op1},
we conclude that at least one of the two vertices $b_{j_1}, a^1$ is adjacent to a vertex $u$ outside the set $\{a_{j_1}, b_{j_1}, a^1, b^1\}$.

{\bf Case 4.2.2.1.} 
If $b_{j_1}$ is adjacent to a vertex $u$ outside the component $C$,
then we add the edge $\{a_{j_1}, b^1\}$ to $C$ while delete the edge $\{a_{j_1}, b_{j_1}\}$ from $C$
to obtain a tree with leaves $b^3, b^2, b_{j_1}$;
thus, adding the edge $\{b_{j_1}, u\}$ to $F$ settles $C$. 

If $a^1$ is adjacent to a vertex $u$ outside the component $C$,
then we add the edges $\{b_{j_1}, b^1\}$ and $\{a_{j_1}, b^1\}$ to $C$ while delete the edges $\{a_{j_1}, b_{j_1}\}$ and $\{a^1, b^1\}$ from $C$
to obtain a tree with leaves $b^3, b^2, a^1$;
thus, adding the edge $\{a^1, u\}$ to $F$ settles $C$. 

{\bf Case 4.2.2.2.} 
If $a^1$ is not adjacent to any vertex outside the component $C$, but to some vertices inside $C$, then let $u$ denote the farthest neighbor on $C$
(tie breaks arbitrarily).
Note that $u$ cannot be any head vertex, as otherwise it violates the algorithm;
$u$ cannot be any tail vertex either, unless it is $b^2$ or $b^3$.
Therefore there are only two possibilities.
(This is very similar to Case 4.2.1.3.)

When $u = b^3$ (whether $b^3 = b_{j_3}$ or not),
we can add the edges $\{b^3, a^1\}$ and $\{a_{j_1}, b^1\}$ to $C$ while delete the edge $\{a_{j_2}, a_{j_1}\}$.
Then, we can delete $\{a^1, b^1\}$ to obtain a path with leaves $b^2, b^1$,
or we can delete $\{a^1, b_{j_1}\}$ to obtain a path with leaves $b^2, b_{j_1}$.
It follows that the maximum total weight of the internal vertices between the two paths is at least
$a^2 + a_{j_1} + a^1 + \max\{b_{j_1}, b^1\} \ge 2 (b^2 + \min\{b_{j_1}, b^1\})$;
therefore, $C$ is settled.

When $u = b^2$, it can be shown the same as in the last paragraph by replacing $b^3$ with $b^2$, that $C$ can be settled.

{\bf Case 4.2.2.3.} 
If $b_{j_1}$ is not adjacent to any vertex outside the component $C$, but to some vertices inside $C$, then let $u$ denote the farthest neighbor on $C$
(tie breaks arbitrarily).
Note that $u \ne a_{j_2}$, which is Case 4.2.1.
When $u \in \{a_{j_3}, b_{j_3}, b_{j_2}\}$, similarly as in Case 4.2.1.2, we can convert $C$ into a path with leaves $b^3, b^2$ to settle $C$.
In the remaining case, $u$ is inside one of the attached type-V components.
(This is very similar to Case 4.2.1.4.)

When $u$ is a head, say $a_{j_4}$ of the edge $e_{j_4}$ in the type-V component attached to $b_{j_2}$ (or $b_{j_3}$),
and assume that $a_{j_4}$ is adjacent to $b_{j_5}$ besides $b_{j_4}$,
then we can do nothing to have a tree with leaves $b^3, b^2, b^1$;
or we can add the edge $\{a_{j_1}, b^1\}$ to $C$ while delete the edge $\{a_{j_1}, b_{j_1}\}$ to obtain a tree with leaves $b^3, b^2, b_{j_1}$;
or we can add the edges $\{a_{j_4}, b_{j_1}\}$ and $\{a_{j_1}, b^1\}$ to $C$
while delete the edges $\{a_{j_4}, b_{j_5}\}$ and $\{a_{j_1}, b_{j_1}\}$ to obtain a tree with leaves $b^3, b^2, b_{j_5}$.
The maximum total weight of the internal vertices among the three trees is at least
$a^3 + a^2 + a_{j_1} + b_{j_1} + a_{j_5} + b_{j_5} + a^1 + b^1 - \min\{b^1, b_{j_1}, b_{j_5}\} \ge 2 (b^3 + b^2 + \min\{b^1, b_{j_1}, b_{j_5}\})$,
which settles $C$.

When $u$ is a tail, say $b_{j_4}$ of the edge $e_{j_4}$ in the type-V component attached to $b_{j_2}$ (or $b_{j_3}$), but $b_{j_4} \notin \{b^2, b^3\}$,
then from the claw-free property and the definition of $u$ we conclude that $b_{j_1}$ is also adjacent to $a_{j_4}$.
Thus the argument in the last paragraph applies to settle $C$.

Lastly, when $u = b^2$ (or $b^3$, which can be shown in the same way), we conclude that
$b^2$ is adjacent either to $a_{j_1}$ or to $a^1$, the latter of which is settled in Case 4.2.2.3.
In the remaining case where $b^2$ is adjacent to $a_{j_1}$,
we can add the edge $\{b^2, a_{j_1}\}$ to $C$ while delete the edge $\{a_{j_1}, a_{j_2}\}$ to obtain a path with leaves $b^3, b^1$;
or we can add the edges $\{b^2, a_{j_1}\}$ and $\{b^1, a_{j_1}\}$ to $C$ while delete the edges $\{a_{j_1}, a_{j_2}\}$ and $\{a_{j_1}, b_{j_1}\}$
to obtain a path with leaves $b^3, b_{j_1}$.
The maximum total weight of the internal vertices between the two trees is at least
$a^3 + a^1 + a_{j_1} + \max\{b^1, b_{j_1}\} \ge 2 (b^3 + \min\{b^1, b_{j_1}\})$,
which settles $C$.

\paragraph{Case 5.}
If $b^1 \ne b_{j_1}$ and $v = a_{j_2}$, then by the claw-free property there is at least an edge among $b_{j_2}, a_{j_1}, b^1$.
Note that $b_{j_2}$ and $b^1$ cannot be adjacent due to the definition of the vertex $v$ being the farthest.
If $a_{j_1}$ and $b^1$ are adjacent, then it has been proven in the above Case 4 that $C$ can be settled.
If $b_{j_2}$ and $a_{j_1}$ are adjacent, then we add the edges $\{a_{j_2}, b^1\}$ and $\{a_{j_1}, b_{j_2}\}$ to $C$
while delete the edges $\{a_{j_2}, a_{j_1}\}$ and $\{a_{j_2}, b_{j_2}\}$ from $C$;
this way we obtain a path with two leaves $b^2$ and $b^3$, and thus it settles $C$.

\paragraph{Case 6.}
If $b^1 \ne b_{j_1}$ and $v = b_{j_2}$, then we add the edge $\{b_{j_2}, b^1\}$ to $C$ while delete the edge $\{a_{j_2}, b_{j_2}\}$ from $C$;
this way we obtain a path with two leaves $b^2$ and $b^3$, and thus it settles $C$.

If $b^1 \ne b_{j_1}$ and $v = a_{j_3}$, then we add the edge $\{a_{j_3}, b^1\}$ to $C$ while delete the edge $\{a_{j_3}, a_{j_2}\}$ from $C$;
this way we obtain a path with two leaves $b^2$ and $b^3$, and thus it settles $C$.

If $b^1 \ne b_{j_1}$ and $v = b_{j_3}$, then there are two possible scenarios.
When $b^3 \ne b_{j_3}$, by the claw-free property and the definition of $v$ we conclude that $a_{j_3}$ and $b^1$ must be adjacent,
and the last paragraph shows that $C$ is settled.
When $b^3 = b_{j_3}$, we add the edge $\{b^3, b^1\}$ to $C$
while either delete the edge $\{a^3, a_{j_2}\}$ from $C$ to achieve a path with two leaves $a^3, b^2$,
or delete the edge $\{a_{j_2}, a_{j_1}\}$ from $C$ to achieve a path with leaves $a_{j_1}, b^2$;
we may also do nothing to $C$ which is a tree with leaves $b^1, b^2, b^3$.
Among these three trees, the maximum total weight of the internal vertices is at least
$a^3 + b^3 + a^2 + a_{j_1} + a^1 + b^1 - \min\{a^3, a_{j_1}, b^1 + b^3\} \ge 2 (b^2 + \min\{a^3, a_{j_1}, b^1 + b^3\})$;
thus, $C$ is settled.

\paragraph{Case 7.}
If $b^1 \ne b_{j_1}$, $b^2 \ne b_{j_2}$, and $v$ is in the type-V component containing $b^2$, we distinguish whether $v$ is a head or a tail.
(Note that the same argument applies to $b^3 \ne b_{j_3}$ and $v$ is in the type-V component containing $b^3$.)

{\bf Case 7.1.}
If $v$ is a head, say $a_{j_4}$ of the edge $e_{j_4}$, and assume that $a_{j_4}$ is adjacent to $b_{j_5}$ besides $b_{j_4}$,
then we consider two distinct scenarios.

{\bf Case 7.1.1.}
When $b_{j_5} \ne b_{j_2}$, besides two leaves $b^2$ and $b^3$, we either do nothing to $C$ to leave $b^1$ as a leaf,
or add the edge $\{a_{j_4}, b^1\}$ to $C$ while delete the edge $\{a_{j_4}, b_{j_5}\}$ from $C$ to leave $b_{j_5}$ as a leaf,
or add the edge $\{a_{j_4}, b^1\}$ to $C$ while delete the edge $\{a_{j_2}, b_{j_2}\}$ from $C$ to leave $b_{j_2}$ as a leaf.
Among these three trees, the maximum total weight of the internal vertices is at least
$a^3 + a^2 + a^1 + b^1 + a_{j_2} + b_{j_2} + a_{j_5} + b_{j_5} - \min\{b^1, b_{j_2}, b_{j_5}\} \ge 2 (b^3 + b^2 + \min\{b^1, b_{j_2}, b_{j_5}\})$;
thus, $C$ is settled.

{\bf Case 7.1.2.}
When $b_{j_5} = b_{j_2}$, by the claw-free property and the definition of the vertex $v$ we conclude that $b_{j_2}$ is adjacent to at least one of $b_{j_4}$ and $b^1$.
If $b_{j_2}$ and $b_{j_4}$ are adjacent, then we add the edges $\{a_{j_4}, b^1\}$ and $\{b_{j_2}, b_{j_4}\}$ to $C$
while delete the edges $\{a_{j_2}, b_{j_2}\}$ and $\{a_{j_4}, b_{j_4}\}$ from $C$ to obtain a path with leaves $b^2$ and $b^3$;
if $b_{j_2}$ and $b^1$ are adjacent, then we add the edge $\{b_{j_2}, b^1\}$ to $C$
while delete the edge $\{a_{j_2}, b_{j_2}\}$ from $C$ to obtain a path with leaves $b^2$ and $b^3$.
Thus, $C$ is settled.

{\bf Case 7.2.}
In the other case $v$ is a tail, say $b_{j_4}$ of the edge $e_{j_4}$.

{\bf Case 7.2.1.}
If $b^2 = b_{j_4}$, then besides the leaf $b^3$, we either do nothing to $C$ to leave $b^1, b^2$ as leaves,
or add the edge $\{b^1, b^2\}$ to $C$ and delete the edge $\{a_{j_2}, a_{j_1}\}$ to leave $a_{j_1}$ as a leaf,
or add the edge $\{b^1, b^2\}$ to $C$ and delete the edge $\{a_{j_2}, b_{j_2}\}$ to leave $b_{j_2}$ as a leaf.
Among these three trees, the maximum total weight of the internal vertices is at least
$a^3 + a^1 + a_{j_1} + b_{j_2} + b^1 + b^2 - \min\{b^1 + b^2, a_{j_1}, b_{j_2}\} \ge 2 (b^3 + \min\{b^1 + b^2, a_{j_1}, b_{j_2}\})$;
thus, $C$ is settled.

(In this paragraph, we deal with the case where $b^3$ takes the role of $b^2$ and prove our claim at the beginning of Case 7
``that the same argument applies to $b^3 \ne b_{j_3}$ and $v$ is in the type-V component containing $b^3$''.
The complete assumption of Case 7.2.1 is thus $b^3 \ne b_{j_3}$ and $v = b^3$.
Then, besides the leaf $b^2$, we either do nothing to $C$ to leave $b^1, b^3$ as leaves,
or add the edge $\{b^1, b^3\}$ to $C$ and delete the edge $\{a_{j_2}, a_{j_1}\}$ to leave $a_{j_1}$ as a leaf,
or add the edge $\{b^1, b^3\}$ to $C$ and delete the edge $\{a_{j_2}, a_{j_3}\}$ to leave $a_{j_3}$ as a leaf.
Among these three trees, the maximum total weight of the internal vertices is at least
$a^2 + a^1 + a_{j_1} + a_{j_3} + b^1 + b^3 - \min\{b^1 + b^3, a_{j_1}, a_{j_3}\} \ge 2 (b^2 + \min\{b^1 + b^3, a_{j_1}, a_{j_3}\})$;
thus, $C$ is settled.
In summary, here the vertex $a_{j_3}$ takes up the role of $b_{j_2}$ correspondingly.)

{\bf Case 7.2.2.}
If $b^2 \ne b_{j_4}$, then by the claw-free property and the definition of $v$ we conclude that $b^1$ is adjacent to $a_{j_4}$ too.
Assume that $a_{j_4}$ is adjacent to $b_{j_5}$ besides $b_{j_4}$.
When $b_{j_5} \ne b_{j_2}$, the argument in Case 7.1.1 can be applied to settle $C$;
when $b_{j_5} = b_{j_2}$ (the argument in Case 7.1.2 does not applied to settle $C$ due to the deferent $v$),
besides the leaves $b^2$ and $b^3$, we either do nothing to $C$ to leave $b^1$ as a leaf,
or add the edge $\{b^1, a_{j_4}\}$ to $C$ and delete the edge $\{a_{j_2}, b_{j_2}\}$ to leave $b_{j_2}$ as a leaf,
or add the edge $\{b^1, a_{j_4}\}$ to $C$ and delete the edge $\{a_{j_2}, a_{j_1}\}$ to leave $a_{j_1}$ as a leaf.
Among these three trees, the maximum total weight of the internal vertices is at least
$a^3 + a^2 + a_{j_1} + a_{j_2} + b_{j_2} + a^1 + b^1 - \min\{b^1, a_{j_1}, b_{j_2}\} \ge 2 (b^3 + b^2 + \min\{b^1, a_{j_1}, b_{j_2}\})$;
thus, $C$ is settled.
This finishes the discussion on Case 7.

In summary, Cases 2-7 together prove that when $b^1 \ne b_{j_1}$, the component $C$ can be settled.
We next consider the situation where $b^1 = b_{j_1}$, that is, there is no type-V component attached to $b_{j_1}$.

\paragraph{Case 8.}
If $b^1 = b_{j_1}$ and $v = a_{j_2}$, then $d_G(b^1) = 2$ and thus $d_G(a^1) \ge 3$ by Operation~\ref{op1}.

{\bf Case 8.1.}
If there is a vertex $u$ outside $C$ that is adjacent to $a^1$, then we add the edge $\{a^1, u\}$ to $F$,
add the edge $\{a_{j_2}, b^1\}$ to $C$ while delete the edge $\{a_{j_2}, a^1\}$ from $C$;
this way, $C$ is settled.

{\bf Case 8.2.}
Note that if $a^1$ and $b_{j_3} \ (\ne b^3)$ are adjacent, then $a^1$ and $a_{j_3}$ are adjacent too.
By the claw-free property, we conclude that $a^1$ must be adjacent to a vertex $u \in \{a_{j_3}, b^3, b_{j_2}, b^2\}$.

{\bf Case 8.2.1.}
When $u = a_{j_3}$ (or $u = b_{j_2}$), we add the edges $\{a^1, u\}$ and $\{b^1, a_{j_2}\}$ to $C$
while delete the edges $\{u, a_{j_2}\}$ and $\{a_{j_2}, a^1\}$ from $C$;
this way, we obtain a path with two leaves $b^3$ and $b^2$, and thus settle $C$.

{\bf Case 8.2.2.}
Otherwise, $a_{j_3}$ and $b_{j_2}$ are adjacent.
When $u = b^2 \ (\ne b_{j_2})$, we add the edge $\{a^1, b^2\}$ to $C$
while delete the edge $\{a_{j_2}, a^1\}$ from $C$, to obtain a path with two leaves $b^3$ and $b^1$;
we may also add the edges $\{a^1, b^2\}$ and $\{a_{j_2}, b^1\}$ to $C$
while delete the edges $\{a_{j_2}, a^1\}$ and $\{a_{j_2}, b_{j_2}\}$ from $C$ to obtain a path with two leaves $b^3$ and $b_{j_2}$.
Between these two paths, the maximum total weight of the internal vertices is at least
$a^3 + a_{j_2} + b_{j_2} + a^1 + b^1 - \min\{b^1, b_{j_2}\} \ge 2 (b^3 + \min\{b^1, b_{j_2}\})$;
thus, $C$ is settled.
When $u = b^3 \ (\ne b_{j_3})$, the same argument applies to settle $C$.

\paragraph{Case 9.}
If $b^1 = b_{j_1}$ and $v = b_{j_2}$, we add the edge $\{b^1, b_{j_2}\}$ to $C$ while delete the edge $\{a_{j_2}, b_{j_2}\}$ from $C$;
this way, we obtain a path with two leaves $b^3$ and $b^2$, and thus $C$ is settled.

If $b^1 = b_{j_1}$ and $v = a_{j_3}$, we add the edge $\{b^1, a_{j_3}\}$ to $C$ while delete the edge $\{a_{j_3}, a_{j_2}\}$ from $C$;
this way, we obtain a path with two leaves $b^3$ and $b^2$, and thus $C$ is settled.

If $b^1 = b_{j_1}$ and $v = b_{j_3}$, then we distinguish whether $b_{j_3} = b^3$ or not.
When $b_{j_3} \ne b^3$, then $b^1$ and $a_{j_3}$ must be adjacent in $G$ and thus $C$ can be settled as in the last paragraph.
When $b_{j_3} = b^3$, if $a_{j_3}$ is adjacent to one of $a^1$ and $b_{j_2}$, then we can obtain a path with two leaves $b^3$ and $b^2$ to settle $C$;
if $a^1$ and $b_{j_2}$ are adjacent and $b_{j_2} \ne b^2$, then we can obtain a path with two leaves being $b^2$ and the lightest among $a^3, a_{j_2}, a^1$, 
and thus the total weight of the internal vertices of this path is at least
$a^2 + a^3 + a_{j_2} + a^1 + b^1 - \min\{a^3, a_{j_2}, a^1\} \ge 2 (b^2 + \min\{a^3, a_{j_2}, a^1\})$,
which settles $C$;
otherwise $b_{j_2} = b^2$, and then we add the edge $\{a^1, b^2\}$ to $C$ while delete the edge $\{a^2, a^1\}$ from $C$ to obtain a cycle,
followed by deleting the lightest edge of $C \cap M^* = \{\{a^1, b^1\}, \{a^2, b^2\}, \{a^3, b^3\}\}$ to settle $C$.

\paragraph{Case 10.}
If $b^1 = b_{j_1}$, $b^2 \ne b_{j_2}$, and $v$ is in the type-V component containing $b^2$, we distinguish whether $v$ is a head or a tail.
Note that the case where $b^3 \ne b_{j_3}$ and $v$ is in the type-V component containing $b^3$ can be argued in exactly the same way.

{\bf Case 10.1.}
If $v$ is a head, say $a_{j_4}$ of the edge $e_{j_4}$, assume that $a_{j_4}$ is adjacent to $b_{j_5}$ besides $b_{j_4}$.
Exactly the same argument as in Case 7.1 applies to settle $C$, since it does not matter whether $b^1 = b_{j_1}$ or not.

{\bf Case 10.2.}
If $v$ is a tail, say $b_{j_4}$ of the edge $e_{j_4}$.
Note that $b_{j_4} \ne b_{j_2}$, which has been dealt in Case 9.

{\bf Case 10.2.1.}
If $b^2 = b_{j_4}$, then besides the leaf $b^3$, we either do nothing to $C$ to leave $b^1, b^2$ as leaves,
or add the edge $\{b^1, b^2\}$ to $C$ and delete the edge $\{a_{j_2}, a^1\}$ to leave $a^1$ as a leaf,
or add the edge $\{b^1, b^2\}$ to $C$ and delete the edge $\{a_{j_2}, b_{j_2}\}$ to leave $b_{j_2}$ as a leaf.
Among these three trees, the maximum total weight of the internal vertices is at least
$a^3 + a^1 + a_{j_2} + b_{j_2} + b^1 + b^2 - \min\{b^1 + b^2, a^1, b_{j_2}\} \ge 2 (b^3 + \min\{b^1 + b^2, a^1, b_{j_2}\})$;
thus, $C$ is settled.

{\bf Case 10.2.2.}
If $b^2 \ne b_{j_4}$, then by the claw-free property and the definition of $v$ we conclude that $b^1$ is adjacent to $a_{j_4}$ too.
Assume that $a_{j_4}$ is adjacent to $b_{j_5}$ besides $b_{j_4}$.
When $b_{j_5} \ne b_{j_2}$, the argument in Case 7.1.1 can be applied to settle $C$;
when $b_{j_5} = b_{j_2}$,
if $a_{j_3}$ and $b_{j_2}$ are adjacent, then we add the edges $\{a_{j_3}, b_{j_2}\}$ and $\{b^1, a_{j_4}\}$ to $C$ and
delete the edges $\{a_{j_3}, a_{j_2}\}$ and $\{b_{j_2}, a_{j_4}\}$ to achieve a path with leaves $b^3$ and $b^2$;
if $a_{j_3}$ and $a^1$ are adjacent, then we add the edges $\{a_{j_3}, a^1\}$, $\{b^1, a_{j_4}\}$ and $\{b^1, b_{j_4}\}$ to $C$ and
delete the edges $\{a_{j_3}, a_{j_2}\}$, $\{a^1, b^1\}$ and $\{a_{j_4}, b_{j_4}\}$ to achieve a path with leaves $b^3$ and $b^2$;
if $b_{j_2}$ and $a^1$ are adjacent, then we add the edges $\{b_{j_2}, a^1\}$, $\{b^1, a_{j_4}\}$ and $\{b^1, b_{j_4}\}$ to $C$ and
delete the edges $\{a_{j_2}, b_{j_2}\}$, $\{a^1, b^1\}$ and $\{a_{j_4}, b_{j_4}\}$ to achieve a path with leaves $b^3$ and $b^2$.
Thus, $C$ is settled.

All possible cases have be discussed in the above.
The lemma is proven.
\end{proof}

\begin{lemma}
\label{lemma11}
A type-III component in the subgraph $H_3$ can be settled.
\end{lemma}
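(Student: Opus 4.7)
A type-III component $C$ in the subgraph $H_3$ consists of exactly one vertex $x \in X$ joined via an edge of $N^{ax}$ to the head $a_j$ of a single matching edge $e_j = \{a_j, b_j\} \in M^*$, possibly augmented by at most one type-V sub-chain attached at $b_j$ through an edge of $N^{ab}$; in the augmented case the original tail $b_j$ is absorbed into a longer path terminating at a new tail vertex $b^1$ of largest tail weight in $C$. Settling $C$ requires producing a spanning tree $T_C$ on $V(C)$, together with at most one specified-leaf edge added to $F$, such that the total weight of the internal vertices of $T_C$ and the specified leaf (if any) is at least $\frac{2}{3}\, w(C \cap M^*)$.

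My plan is to mimic the farthest-neighbor case analysis of Lemma~\ref{lemma09}: first check whether $b^1$ has any neighbor outside $C$, and if not, split on where the vertex $v$ farthest from $b^1$ along the path of $C$ lies. Structurally a type-III component is a type-I component with one tail arm replaced by the vertex $x$, so every case from Lemma~\ref{lemma09} has a direct counterpart here. Two features make the ratio verifications strictly easier than in Lemma~\ref{lemma09}: (i) $w(x)$ is never charged against $w(C \cap M^*)$ since $x \notin A \cup B$, so it is a free bonus whenever $x$ becomes internal; and (ii) as soon as any edge of $F$ is incident to $x$ during the interconnection phase, $x$ becomes internal at no cost to the bookkeeping.

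I would first dispose of the case in which $b^1$ has any external neighbor $v$: add $\{b^1, v\}$ to $F$ and keep $C$ as its current path. Every endpoint of an edge in $C \cap M^*$ is then either internal in $T_C$ (heads by Lemma~\ref{lemma08}, interior tails because they sit between two matching edges) or specified (namely $b^1$), so the internal-plus-specified weight equals $w(C \cap M^*)$ and settles $C$. If $b^1$ has no external neighbor and its farthest neighbor $v$ lies in the attached type-V sub-chain, the arguments from Cases~2 and~3 of Lemma~\ref{lemma09} transfer almost verbatim, using claw-freeness to force $d_C(b^1, v) = 2$ and Operation~\ref{op1} to supply the extra edge at $a^1$ needed for rerouting. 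When $v$ lies on the short arm toward $x$, i.e.\ $v \in \{b_j, a_j, x\}$, we form a cycle through $b^1$ and $v$, break it at the lightest edge of $C \cap M^*$ on the cycle, and verify the $\frac{2}{3}$-ratio using the inequality $w_1 + w_2 + w_3 - \min\{w_1, w_2, w_3\} \ge 2\min\{w_1, w_2, w_3\}$ stated just above Lemma~\ref{lemma09}.

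The delicate sub-case is when no type-V is attached, so $b^1 = b_j$ and $|C \cap M^*| = 1$: one must reach internal-plus-specified weight at least $\frac{2}{3}(a_j + b_j)$ from only three vertices $\{x, a_j, b_j\}$. The natural move is to earmark $b_j$ as the specified leaf whenever $b_j$ has an external neighbor, settling $C$ with weight $a_j + b_j$. Otherwise connectivity with $|V| \ge 5$ ensures that some vertex of $C$ has an external edge, and the exclusion of hanging triangles guaranteed by Operation~\ref{op1} forces enough auxiliary structure (either the triangle chord $\{x, b_j\}$ in $G$, or $d_G(a_j) \ge 3$ with an external neighbor) to reroute $C$ so that some tail-weight vertex (perhaps $b_j$ itself, perhaps a surrogate obtained by swapping $\{a_j, b_j\}$ against $\{a_j, v\}$) becomes internal in $T_C$. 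The principal obstacle throughout, inherited from Lemmas~\ref{lemma09} and~\ref{lemma10}, is case-enumeration bookkeeping rather than any conceptual leap; the hanging-configuration reductions of Operation~\ref{op1} are the essential lever that rules out the adversarial small configurations which would otherwise leave too little weight for internal vertices.
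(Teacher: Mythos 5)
Your overall strategy—farthest-neighbor case analysis mirroring Lemma~\ref{lemma09}, with $w(x)$ as a free bonus, and Operation~\ref{op1} as the lever for the degenerate $|C\cap M^*|=1$ case—matches the paper's proof, and Cases~1, 2, 3 and the base case are handled the same way. However, there is a genuine gap in your treatment of the mid-path cases, i.e.\ when the farthest neighbor $v$ of $b^1$ lies in $\{b_{j_1}, a_{j_1}, x\}$. You propose to ``form a cycle through $b^1$ and $v$ and break it at the lightest edge of $C\cap M^*$,'' but that move does not always meet the $\tfrac23$ ratio. First, for $v=b_{j_1}$ the paper does not break a cycle at all; it needs Operation~\ref{op1} to produce an external neighbor $u$ of $a^1$ and an $F$-edge (Case~3.2), since $d_G(b^1)=2$ there. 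Second, for $v=a_{j_1}$ or $v=x$ with a short attached chain, say $|C\cap M^*|=2$ with $w(a_{j_1})=w(b_{j_1})=w(a^1)=w(b^1)=1$, both $M^*$ edges are equally light; breaking at $\{a^1,b^1\}$ makes $a^1$ and $b^1$ both leaves and leaves internal weight $2 = \tfrac12 w(C\cap M^*) < \tfrac23 w(C\cap M^*)$. The paper avoids this by a leaf-choice argument rather than a lightest-edge cycle break: compare keeping the current path (leaf $b^1$) with the swap that adds $\{v,b^1\}$ and deletes the specific edge $\{a_{j_1},b_{j_1}\}$ (leaf $b_{j_1}$), then keep whichever leaf is lighter, giving internal weight at least $a_{j_1}+a^1+\max\{b_{j_1},b^1\}\ge 3\min\{b_{j_1},b^1\}$ (or $\ge 2\min\{a_{j_1},b^1\}$ for $v=x$). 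Also note the deleted edge $\{a_{j_1},x\}$ in the $v=x$ case is not an $M^*$ edge, so ``lightest $M^*$ edge'' misdescribes the move; and the inequality $w_1+w_2+w_3-\min\ge 2\min$ is the right verification tool only when you have freedom to choose which single vertex to leave as the leaf, which the cycle-break-at-lightest-edge move does not give you.
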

\begin{proof}
Recall that a type-III component $C$ in its original form contains only one edge $e_{j_1} = \{a_{j_1}, b_{j_1}\}$ of $M^*$,
with another edge $\{a_{j_1}, x\}$ where $x \in X$ (see Figure~\ref{fig03a});
there could be a type-V component attached to $b_{j_1}$, with its tail $b^1$ replacing the role of $b_{j_1}$.

\paragraph{Case 1.}
If $b^1$ is adjacent to a vertex $v$ outside $C$, then we add the edge $\{b^1, v\}$ to $F$, which settles $C$,
since the total weight of the internal vertices in $C$ is $w(C \cap M^*)$.

We next consider the case where $b^1$ is not adjacent to any vertex outside $C$, and thus it has to be adjacent to some vertex inside $C$.
Note that $C$ is a path with $b^1$ and $x$ being its two ending vertices.
Let $v$ denote the vertex adjacent to $b^1$ that is the \underline{\em farthest} to $b^1$ on $C$.
We distinguish this distance $d_C(b^1, v) \ge 2$ and where $v$ locates.

\paragraph{Case 2.}
If $b^1 \ne b_{j_1}$ and $v$ is in the type-V component containing $b^1$, then $v$ must be a tail of an edge of $M^*$ and thus $d_C(b^1, v)$ is even.
Denote this edge as $e_{j_2} = \{a_{j_2}, b_{j_2}\}$, that is $v = b_{j_2}$.

{\bf Case 2.1.}
If $d_C(b^1, v) \ge 4$, then denote the head vertex $b_{j_2}$ is adjacent to in the type-V component as $a_{j_3}$, besides $a_{j_2}$.
We conclude from the claw-free property that there must be at least an edge among $a_{j_2}, a_{j_3}, b^1$, which contradicts the identity of the type-V component.
Therefore, it is impossible to have $d_C(b^1, v) \ge 4$.

{\bf Case 2.2.}
If $d_C(b^1, v) = 2$, then we conclude that $d_G(b^1) = 2$ and thus $d_G(a^1) \ge 3$ by Operation~\ref{op1},
i.e. there is at least another edge incident at $a^1$ besides $\{a^1, b_{j_2}\}$ and $\{a^1, b^1\}$.
Denote this neighbor of $a^1$ as $u$, which is impossible to be inside $C$ by our construction algorithm.
Thus, we add the edge $\{b^1, b_{j_2}\}$ to $C$ while delete the edge $\{a^1, b_{j_2}\}$ from $C$,
and add the edge $\{a^1, u\}$ to $F$;
this way, the component becomes a path and thus $C$ is settled.

\paragraph{Case 3.}
If $b^1 \ne b_{j_1}$ and $v = b_{j_1}$, we consider the size of the type-V component containing $b^1$.

{\bf Case 3.1.}
If this type-V component contains more than one edge of $M^*$, then by the claw-free property $a_{j_1}$ must be adjacent to $b^1$,
which violates the definition of $v$ being the farthest and thus it is impossible.

{\bf Case 3.2.}
If the type-V component containing $b^1$ has only one edge of $M^*$, that is $\{a^1, b^1\}$,
then exactly the same argument in Case 2.2 settles $C$.

\paragraph{Case 4.}
If $b^1 \ne b_{j_1}$ and $v = a_{j_1}$, then we leave $C$ as it is if $w(b^1) \le w(b_{j_1})$,
or we add the edge $\{a_{j_1}, b^1\}$ to $C$ while delete the edge $\{a_{j_1}, b_{j_1}\}$ from $C$.
In either case, the total weight of the internal vertices in the resulting path is at least
$a_{j_1} + a^1 + \max\{b_{j_1}, b^1\} \ge 3 \min\{b_{j_1}, b^1\}$, and thus it settles $C$.

\paragraph{Case 5.}
If $b^1 \ne b_{j_1}$ and $v = x$, then we leave $C$ as it is if $w(b^1) \le w(a_{j_1})$,
or we add the edge $\{x, b^1\}$ to $C$ while delete the edge $\{a_{j_1}, x\}$ from $C$.
In either case, the total weight of the internal vertices in the resulting path is at least
$a^1 + \max\{a_{j_1}, b^1\} \ge 2 \min\{a_{j_1}, b^1\}$, and thus it settles $C$.

In summary, Cases 2-5 together prove that when $b^1 \ne b_{j_1}$, the component $C$ can be settled.
We next consider the situation where $b^1 = b_{j_1}$, that is, there is no type-V component adjacent to $b_{j_1}$.

\paragraph{Case 6.}
If $b^1 = b_{j_1}$, then $v = x$, and we conclude that $d_G(b^1) = 2$ and thus $d_G(a^1) \ge 3$ by Operation~\ref{op1},
that is, there is a vertex $u$ outside $C$ that is adjacent to $a^1$.
Thus we add the edge $\{a^1, u\}$ to $F$,
add the edge $\{x, b^1\}$ to $C$ while delete the edge $\{a^1, x\}$ from $C$;
this way, $C$ is settled.

All possible cases have be discussed in the above.
The lemma is proven.
\end{proof}

\begin{lemma}
\label{lemma12}
A type-IV component in the subgraph $H_3$ can be settled.
\end{lemma}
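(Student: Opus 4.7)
The plan is to mirror the case-analysis framework developed in Lemmas~\ref{lemma09}--\ref{lemma11}, adapting each case to the type-IV skeleton. An augmented type-IV component $C$ has as its skeleton the path from $b_{j_1}$ through $a_{j_1}, x, a_{j_2}$ to $b_{j_2}$, where $x \in X$ is the bridge vertex, with possibly one type-V component attached at each of $b_{j_1}$ and $b_{j_2}$; let $b^1, b^2$ denote the resulting tail endpoints (with paired heads $a^1, a^2$) and assume without loss of generality $w(b^1) \ge w(b^2)$. The target is internal-plus-specified-leaf weight at least $\frac 23 w(C \cap M^*)$, equivalently at least twice the weight of the remaining leaves.

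First I would dispose of the easy case where $b^1$ is adjacent to some vertex outside $C$: add that edge to $F$, leaving every $a_j$ along the augmented path together with $x$ internal and $b^1$ as the specified leaf, so the certified weight is at least $a^1 + a^2 + b^1 \ge 3 b^2$, and $C$ is settled.

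In the remaining case, where $b^1$ has no external neighbour, I would let $v$ be the farthest neighbour of $b^1$ inside $C$ and split on the location of $v$, in three families of subcases handled almost verbatim as in Lemma~\ref{lemma09}. If $v$ lies inside the type-V chunk attached to $b_{j_1}$, claw-freeness combined with the absence of hanging triangles/diamonds/pentagons (Operation~\ref{op1}) forces an external neighbour $u$ of $a^1$, and rewiring with $\{a^1, u\}$ added to $F$ settles $C$. If $v$ lies on the skeleton itself, namely $v \in \{b_{j_1}, a_{j_1}, x, a_{j_2}, b_{j_2}\}$, a local edge swap either closes a single cycle whose lightest $M^*$-edge can be removed, or produces a path leaving one of $b^2, b_{j_1}, b_{j_2}$ as the lighter leaf; in either situation the three-term inequality $w_1 + w_2 + w_3 - \min\{w_1, w_2, w_3\} \ge 2 \min\{w_1, w_2, w_3\}$ exploited earlier yields the $2/3$ bound. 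If $v$ lies inside the type-V chunk attached to $b_{j_2}$, I would subdivide into ``$v$ is a head'' and ``$v$ is a tail'' exactly as in Case~7 of Lemma~\ref{lemma09}. The symmetric situation $b^1 = b_{j_1}$ (no type-V at $b_{j_1}$) forces $d_G(b_{j_1}) = 2$, so Operation~\ref{op1} guarantees an external neighbour $u$ of one of $a_{j_1}, x, a^2$, and the remainder of $C$ is converted into a path ending at $b^2$.

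The hardest part I expect is the subcase $v = x$: since $x$ is the bridge of the skeleton and its weight does not count toward $w(C \cap M^*)$, any rewiring through $x$ must be verified against claw-freeness at $x$ and against Lemma~\ref{lemma05} (each $x \in X$ is adjacent to at most two isolated $M^*$-heads), and one must check that deleting the lightest $M^*$-edge of any newly created cycle still certifies the $2/3$ bound. Once this subcase is dispatched, the remaining subcases are routine adaptations of arguments already spelled out in Lemmas~\ref{lemma09}--\ref{lemma11}.
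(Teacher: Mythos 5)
Your overall framework matches the paper's: dispose of the case where $b^1$ has a neighbour outside $C$ by adding that edge to $F$, then split on the location of the farthest internal neighbour $v$ of $b^1$ — in the type-V chunk at $b^1$, on the skeleton, or in the type-V chunk at $b^2$ — and finish with the three-term $\min$-inequality or by deleting the lightest $M^*$-edge of a created cycle. Two of your specific claims are wrong, though, and would misdirect the write-up.

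First, you flag $v = x$ as ``the hardest part'' and worry about Lemma~\ref{lemma05}. In the paper it is one of the \emph{easiest} subcases, and Lemma~\ref{lemma05} plays no role here. When $b^1 \ne b_{j_1}$ and $v = x$, the claw at $x$ involves $a_{j_1}$, $a_{j_2}$, and $b^1$; the two heads $a_{j_1}, a_{j_2}$ cannot be adjacent (otherwise $M^{aa}$ would have matched them and $C$ would be type-I, not type-IV), and $b^1$ is not adjacent to $a_{j_2}$ because $x$ is the farthest neighbour of $b^1$. Claw-freeness then forces $a_{j_1}$ adjacent to $b^1$, which reduces directly to the $v = a_{j_1}$ subcase; the paper even remarks this is simpler than the analogous case in Lemma~\ref{lemma10} for exactly this reason. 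When $b^1 = b_{j_1}$ and $v = x$, one really does get $d_G(b^1)=2$ and invokes Operation~\ref{op1}, but that too is short. The genuinely lengthy parts are the subcases where $v$ lies inside a type-V chunk (your Case~7/Case~10 analogues), not $v = x$.

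Second, your claim that ``$b^1 = b_{j_1}$ (no type-V at $b_{j_1}$) forces $d_G(b_{j_1})=2$, so Operation~\ref{op1} guarantees an external neighbour of one of $a_{j_1}, x, a^2$'' is false in general. The degree-2 conclusion follows \emph{only} in the subcase $v = x$ (the closest possible farthest neighbour). If $v = a_{j_2}$, $v = b_{j_2}$, or $v$ lies in the type-V chunk attached to $b_{j_2}$, then $b^1$ has additional neighbours and Operation~\ref{op1} is never invoked; those subcases are handled by direct rewiring (e.g.\ for $v = a_{j_2}$, add $\{a_{j_2},b^1\}$ and delete $\{a_{j_2},x\}$ to produce a path with leaves $x$ and $b^2$) or, as in the $v = b_{j_2}$ with $b^2 = b_{j_2}$ subcase, by closing a cycle and deleting the lightest $M^*$-edge. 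Your sketch, read literally, would miss these cases.
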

\begin{proof}
Denote the two edges of $M^*$ in the type-IV component $C$ in its original form as $e_{j_1}$ and $e_{j_2}$.
Note that both $a_{j_1}$ and $a_{j_2}$ are adjacent to a vertex $x \in X$ (see Figure~\ref{fig03b}),
and there could be a type-V component attached to $b_{j_1}$ and $b_{j_2}$, respectively, with the tails $b^1, b^2$ replacing the roles of $b_{j_1}, b_{j_2}$.
We assume w.l.o.g. that $w(b^1) \ge w(b^2)$.

\paragraph{Case 1.}
If $b^1$ is adjacent to a vertex $v$ outside $C$, then we add the edge $\{b^1, v\}$ to $F$, which settles $C$,
since the total weight of the internal vertices in $C$ is at least $a^1 + b^1 + a^2 \ge 3 b^2$.

We next consider the case where $b^1$ is not adjacent to any vertex outside $C$, and thus it has to be adjacent to some vertex inside $C$.
Note that $C$ is a path with $b^1$ and $b^2$ being its two ending vertices.
Let $v$ denote the vertex adjacent to $b^1$ that is the \underline{\em farthest} to $b^1$ on $C$.
We distinguish this distance $d_C(b^1, v) \ge 2$ and where $v$ locates.

\paragraph{Case 2.}
If $b^1 \ne b_{j_1}$ and $v$ is in the type-V component containing $b^1$, then $v$ must be a tail of an edge of $M^*$ and thus $d_C(b^1, v)$ is even.
Denote this edge as $e_{j_3} = \{a_{j_3}, b_{j_3}\}$, that is $v = b_{j_3}$.

{\bf Case 2.1.}
If $d_C(b^1, v) \ge 4$, then denote the head vertex $b_{j_3}$ is adjacent to in the type-V component as $a_{j_4}$, besides $a_{j_3}$.
We conclude from the claw-free property that there must be at least an edge among $a_{j_3}, a_{j_4}, b^1$, which contradicts the identity of the type-V component.
Therefore, it is impossible to have $d_C(b^1, v) \ge 4$.

{\bf Case 2.2.}
If $d_C(b^1, v) = 2$, then we conclude that $d_G(b^1) = 2$ and thus $d_G(a^1) \ge 3$ by Operation~\ref{op1},
i.e. there is at least another edge incident at $a^1$ besides $\{a^1, b_{j_3}\}$ and $\{a^1, b^1\}$.
Denote this neighbor of $a^1$ as $u$.
If $u$ is inside $C$, then $u = b^2$ by our construction algorithm and the claw-free property.
We add the edges $\{a^1, b^2\}$ and $\{b_{j_3}, b^1\}$ to $C$ while delete the edge $\{b_{j_3}, a^1\}$ from $C$ to obtain a cycle,
followed by deleting the lightest edge of $C \cap M^*$;
this settles $C$ since $C \cap M^*$ has at least three edges.
If $u$ is outside $C$, then we add the edge $\{b^1, b_{j_3}\}$ to $C$ while delete the edge $\{a^1, b_{j_3}\}$ from $C$,
and add the edge $\{a^1, u\}$ to $F$;
this way, the component becomes a path and thus $C$ is settled.

\paragraph{Case 3.}
If $b^1 \ne b_{j_1}$ and $v = b_{j_1}$, we consider the size of the type-V component containing $b^1$.

{\bf Case 3.1.}
If this type-V component contains more than one edge of $M^*$, then by the claw-free property $a_{j_1}$ must be adjacent to $b^1$,
which violates the definition of $v$ being the farthest and thus it is impossible.

{\bf Case 3.2.}
If the type-V component containing $b^1$ has only one edge of $M^*$, that is $\{a^1, b^1\}$,
then from the definition of $v$ we have $d_G(b^1) = 2$, and thus $d_G(a^1) \ge 3$ by Operation~\ref{op1},
i.e. there is at least another edge incident at $a^1$ besides $\{a^1, b_{j_1}\}$ and $\{a^1, b^1\}$.
The same argument as in Case 2.2, with $j_3$ replaced by $j_1$, applies to settle $C$.

\paragraph{Case 4.}
If $b^1 \ne b_{j_1}$ and $v = a_{j_1}$, then we leave $C$ as it is when $w(b^1) \le w(b_{j_1})$,
or we add the edge $\{a_{j_1}, b^1\}$ to $C$ while delete the edge $\{a_{j_1}, b_{j_1}\}$ from $C$.
In either case, the total weight of the internal vertices in the resulting path is at least
$a^2 + a_{j_1} + a^1 + \max\{b_{j_1}, b^1\} \ge 2 (b^2 + \min\{b_{j_1}, b^1\})$, and thus it settles $C$.

\paragraph{Case 5.}
If $b^1 \ne b_{j_1}$ and $v = x$, then by the definition of $v$ and the claw-free property $a_{j_1}$ and $b^1$ are adjacent.
We thus settle $C$ as in Case 4.
We note that this is simpler than Case 5 in the proof of Lemma~\ref{lemma10} because here $a_{j_1}$ and $a_{j_2}$ cannot be adjacent.

\paragraph{Case 6.}
If $b^1 \ne b_{j_1}$ and $v = a_{j_2}$, then we add the edge $\{a_{j_2}, b^1\}$ to $C$ while delete the edge $\{a_{j_2}, x\}$ from $C$.
This gives a path with leaves $b^2$ and $x$, and thus it settles $C$.

If $b^1 \ne b_{j_1}$ and $v = b_{j_2}$, then we conclude that $a_{j_2}$ and $b^1$ are adjacent when $b^2 \ne b_{j_2}$,
and thus we settle $C$ as in the last paragraph;
when $b^2 = b_{j_2}$, we add the edge $\{b^2, b^1\}$ to $C$ to obtain a cycle,
followed by deleting the lightest edge of $C \cap M^*$ from $C$.
Since $C \cap M^*$ has at least three edges, this settles $C$.

\paragraph{Case 7.}
If $b^1 \ne b_{j_1}$, $b^2 \ne b_{j_2}$, and $v$ is in the type-V component containing $b^2$, we distinguish whether $v$ is a head or a tail.

{\bf Case 7.1.}
If $v$ is a head, say $a_{j_3}$ of the edge $e_{j_3}$, and assume that $a_{j_3}$ is adjacent to $b_{j_4}$ besides $b_{j_3}$,
then we do nothing to $C$ to leave $b^1$ as a leaf when $w(b^1) \le w(b_{j_4})$,
or otherwise add the edge $\{a_{j_3}, b^1\}$ to $C$ while delete the edge $\{a_{j_3}, b_{j_4}\}$ from $C$ to leave $b_{j_4}$ as a leaf.
In either way, the total weight of the internal vertices of the resulting path is at least
$a^2 + a_{j_4} + a^1 + \max\{b_{j_4}, b^1\} \ge 2 (b^2 + \min\{b_{j_4}, b^1\})$.
Therefore, in either case $C$ can be settled.

{\bf Case 7.2.}
If $v$ is a tail, say $b_{j_3}$ of the edge $e_{j_3}$.
If $b^2 = b_{j_3}$, then we add the edge $\{b^1, b^2\}$ to $C$ to obtain a cycle,
followed by deleting the lightest edge of $C \cap M^*$ from $C$.
Since $C \cap M^*$ has at least three edges, this settles $C$.
If $b^2 \ne b_{j_3}$, then by the claw-free property and the definition of $v$ we conclude that $b^1$ is adjacent to $a_{j_3}$ too,
and thus the argument in Case 7.1 applies to settle $C$.
This finishes the discussion on Case 7.

In summary, Cases 2-7 together prove that when $b^1 \ne b_{j_1}$, the component $C$ can be settled.
We next consider the situation where $b^1 = b_{j_1}$, that is, there is no type-V component adjacent to $b_{j_1}$.

\paragraph{Case 8.}
If $b^1 = b_{j_1}$ and $v = x$, and we conclude that $d_G(b^1) = 2$ and thus $d_G(a^1) \ge 3$ by Operation~\ref{op1},
that is there is a vertex $u$ adjacent to $a^1$ other than $x$ and $b^1$.

{\bf Case 8.1.}
If $u$ is inside $C$, then $u = b^2$ by our construction algorithm and the claw-free property.
If $b^2 \ne b_{j_2}$, then we add the edges $\{a^1, b^2\}$ and $\{x, b^1\}$ to $C$ while delete the edge $\{x, a^1\}$ from $C$ to obtain a cycle,
followed by deleting the lightest edge of $C \cap M^*$;
this settles $C$ since $C \cap M^*$ has at least three edges.
If $b^2 = b_{j_2}$, then we add the edges $\{a^1, b^2\}$ and $\{x, b^1\}$ to $C$ while delete the edges $\{x, a^1\}$ and $\{x, a^2\}$ from $C$
to obtain a path with leaves $x$ and $a^2$ when $w(a^2) \le w(b^1)$,
or otherwise we add the edge $\{a^1, b^2\}$ to $C$ while delete the edge $\{x, a^1\}$ from $C$
to obtain a path with leaves $x$ and $b^1$.
In either way, the total weight of the internal vertices of the resulting path is at least
$a^1 + \max\{a^2, b^1\} \ge 2 \min\{a^2, b^1\}$.
Therefore, $C$ can be settled.

{\bf Case 8.2.}
If $u$ is outside $C$, then we add the edge $\{b^1, x\}$ to $C$ while delete the edge $\{a^1, x\}$ from $C$,
and add the edge $\{a^1, u\}$ to $F$;
this way, the component becomes a path and thus $C$ is settled.

\paragraph{Case 9.}
If $b^1 = b_{j_1}$ and $v = a_{j_2}$, then we add the edge $\{a_{j_2}, b^1\}$ to $C$ while delete the edge $\{a_{j_2}, x\}$ from $C$
to obtain a path with leaves $x$ and $b^2$.
Therefore, $C$ can be settled.

If $b^1 = b_{j_1}$ and $v = b_{j_2}$,
when $b^2 \ne b_{j_2}$ then we conclude from the claw-free property and the definition of $v$ that $a_{j_2}$ and $b^1$ are also adjacent,
and thus $C$ can be settled as in the last paragraph;
when $b^2 = b_{j_2}$, we can either do nothing to $C$ to leave $b^1$ and $b^2$ as leaves,
or add the edge $\{b^2, b^1\}$ to $C$ to obtain a cycle, followed by either deleting the edge $\{a^2, x\}$ to obtain a path with leaves $a^2$ and $x$,
or deleting the edge $\{a^1, x\}$ to obtain a path with leaves $a^1$ and $x$.
This way, the maximum total weight of the internal vertices among the three resulting paths is at least
$a^1 + a^2 + b^1 + b^2 - \min\{a^1, a^2, b^1 + b^2\} \ge 2 \min\{a^1, a^2, b^1 + b^2\}$.
Thus, $C$ can be settled.

\paragraph{Case 10.}
If $b^1 = b_{j_1}$, $b^2 \ne b_{j_2}$, and $v$ is in the type-V component containing $b^2$, we distinguish whether $v$ is a head or a tail.

{\bf Case 10.1.}
If $v$ is a head, say $a_{j_3}$ of the edge $e_{j_3}$, and assume that $a_{j_3}$ is adjacent to $b_{j_4}$ besides $b_{j_3}$,
then we do nothing to $C$ to leave $b^1$ as a leaf if $w(b^1) \le w(b_{j_4})$,
or otherwise add the edge $\{a_{j_3}, b^1\}$ to $C$ while delete the edge $\{a_{j_3}, b_{j_4}\}$ from $C$ to leave $b_{j_4}$ as a leaf.
This way, the total weight of the internal vertices of the resulting path is at least
$a^1 + a^2 + a_{j_4} + \max\{b^1, b_{j_4}\} \ge 2 (b^2 + \min\{b^1, b_{j_4}\})$.
Thus, $C$ can be settled.

{\bf Case 10.2.}
If $v$ is a tail, say $b_{j_3}$ of the edge $e_{j_3}$.
If $b^2 = b_{j_3}$, then we add the edge $\{b^1, b^2\}$ to $C$ to make it a cycle followed by deleting the lightest edge of $C \cap M^*$;
this settles $C$ since there are at least three edges in $C \cap M^*$.
If $b^2 \ne b_{j_3}$, then from the claw-free property and the definition of $v$, we conclude that $a_{j_3}$ must be adjacent to $b^1$ too.
Therefore, the same as in Case 10.1, $C$ can be settled.

All possible cases have be discussed in the above.
The lemma is proven.
\end{proof}

\begin{lemma}
\label{lemma13}
A type-VI component in the subgraph $H_3$ can be settled.
\end{lemma}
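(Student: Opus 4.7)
The plan is to split on $k = |C \cap M^*|$, the number of $M^*$ edges in the type-VI cycle $C$. By definition $k \geq 2$, and $C$ alternates between edges of $M^*$ and edges of $M^{ab}$.

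If $k \geq 3$, I would simply delete the lightest edge $e_\ell \in C \cap M^*$ to turn $C$ into a path. The two new leaves are the endpoints of $e_\ell$ and every other vertex is internal, so the total internal weight is at least
\[
w(C \cap M^*) - w(e_\ell) \;\ge\; \tfrac{k-1}{k}\, w(C \cap M^*) \;\ge\; \tfrac{2}{3}\, w(C \cap M^*).
\]
No specified leaf is needed, so $C$ is settled without adding anything to $F$.

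The interesting case is $k = 2$, where $C$ is a $4$-cycle on $a_{j_1}, b_{j_1}, a_{j_2}, b_{j_2}$ with $e_{j_1} = \{a_{j_1}, b_{j_1}\}$ and $e_{j_2} = \{a_{j_2}, b_{j_2}\}$. Observe first that $a_{j_1}$ and $a_{j_2}$ are not adjacent: otherwise $\{a_{j_1}, a_{j_2}\}$ could be added to $M^{aa}$, contradicting its maximum cardinality. The crux is then to show that at least one of the two heads has a neighbor outside $C$. Suppose not. Since $G$ is connected and $|V| \ge 5$, some tail must have an outside neighbor. If only one tail does, say $b_{j_1}$, then $b_{j_1}$ is a cut-vertex whose removal isolates the $3$-vertex set $\{a_{j_1}, a_{j_2}, b_{j_2}\}$, and the induced subgraph on $\{a_{j_1}, b_{j_1}, a_{j_2}, b_{j_2}\}$ contains the $4$-cycle of $C$---a hanging diamond that Operation~\ref{op1} has already removed, contradiction. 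If both tails have outside neighbors, pick an outside neighbor $u$ of $b_{j_1}$; applying the claw-free condition to the three neighbors $\{a_{j_1}, a_{j_2}, u\}$ of $b_{j_1}$ and using that $a_{j_1}a_{j_2}$ is a non-edge forces $u$ to be adjacent to $a_{j_1}$ or $a_{j_2}$, contradicting the assumption on heads.

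So without loss of generality $a_{j_1}$ has an outside neighbor $u$; I add $\{a_{j_1}, u\}$ to $F$ and treat $a_{j_1}$ as the specified leaf that will become internal in the final spanning tree. I would then consider the two breaks of $C$ that leave $a_{j_1}$ a leaf: deleting $e_{j_1}$ produces the counted weight $w(a_{j_1}) + w(a_{j_2}) + w(b_{j_2})$, and deleting the $M^{ab}$-edge $\{a_{j_1}, b_{j_2}\}$ produces $w(a_{j_1}) + w(a_{j_2}) + w(b_{j_1})$. Their average is $w(a_{j_1}) + w(a_{j_2}) + \tfrac{1}{2}(w(b_{j_1}) + w(b_{j_2}))$, which, using $w(a_{j_i}) \ge w(b_{j_i})$, satisfies
\[
w(a_{j_1}) + w(a_{j_2}) + \tfrac{1}{2}\bigl(w(b_{j_1}) + w(b_{j_2})\bigr) \;\ge\; \tfrac{2}{3}\bigl(w(a_{j_1}) + w(b_{j_1}) + w(a_{j_2}) + w(b_{j_2})\bigr) \;=\; \tfrac{2}{3}\, w(C \cap M^*),
\]
so the larger of the two breaks settles $C$. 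The main obstacle is the structural claim just used: cycle-breaking alone cannot reach $\tfrac{2}{3} w(C \cap M^*)$ when both tails carry very little weight, so securing an outside neighbor at a head---by a joint appeal to the maximality of $M^{aa}$, the claw-free condition, and the hanging-diamond reduction of Operation~\ref{op1}---is the real content of the proof.
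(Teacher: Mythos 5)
Your proof is correct and follows essentially the same approach as the paper: handle $k \ge 3$ by deleting the lightest $M^*$-edge, and for the $4$-cycle case show that some head has a neighbor outside $C$ (via the maximality of $M^{aa}$, the claw-free condition, and $|V| \ge 5$), then break the cycle with that head as the specified leaf. Two cosmetic differences: your hanging-diamond subcase is subsumed by your own claw-free argument (which applies whenever any tail has an outside neighbor), and the paper, by deleting the $M^{ab}$-edge opposite the heavier tail, obtains the slightly stronger bound $\tfrac{3}{4}\,w(C\cap M^*)$ directly rather than your averaged $\tfrac{2}{3}$, which is still sufficient.
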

\begin{proof}
Recall that a type-VI component $C$ is a cycle containing two or more edges of $M^*$,
where the head of one edge of $M^*$ is adjacent to the tail of another edge of $M^*$ (see Figure~\ref{fig04b}).
Clearly, if there are three or more edges of $M^*$ in $C$, we simply delete the lightest one to settle $C$.
In the sequel we deal with the case where $C$ is a length-$4$ cycle.
Denote the two edges of $C \cap M^*$ as $\{a^1, b^1\}$ and $\{a^2, b^2\}$, and assume that $w(b^1) \ge w(b^2)$.

If $a^1$ ($a^2$, respectively) is adjacent to a vertex $v$ outside $C$,
then we add the edge $\{a^1, v\}$ ($\{a^2, v\}$, respectively) to $F$ and delete the edge $\{a^1, b^2\}$ ($\{a^2, b^2\}$, respectively) from $C$;
this way, the total weight of the internal vertices is at least
$a^1 + a^2 + b^1 \ge 3 b^2$,
and thus $C$ is settled.

If neither $a^1$ nor $a^2$ is adjacent to any vertex $v$ outside $C$, then we conclude from the construction algorithm that $d_G(a^1) = d_G(a^2) = 2$
since $a^1$ and $a^2$ are not adjacent to each other.
It follows from the claw-free property that neither $b^1$ nor $b^2$ can be adjacent to any vertex $u$ outside $C$.
This implies $|V| = 4$, a contradiction to our assumption that $|V| \ge 5$.
The lemma is proved.
\end{proof}

\begin{theorem}
\label{thm02}
The MwIST problem on claw-free graphs admits a $12/7$-approximation algorithm.
\end{theorem}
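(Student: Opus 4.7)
The plan is to follow the template of the proof of Theorem~\ref{thm01}, but with the five settlement lemmas replacing the single matching-based argument. First, I would confirm the algorithm runs in polynomial time: the bottleneck is computing the weighted matching $M^*$ and the auxiliary cardinality matchings $M^{aa}$ and $M^{ab}$, all standard primitives, while exhaustive application of Operation~\ref{op1} and the constant-size local surgeries in Lemmas~\ref{lemma09}--\ref{lemma13} contribute only polynomial overhead.

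Next, I would aggregate the settlement guarantees. By Lemmas~\ref{lemma09}--\ref{lemma13}, every component $C$ of $H_3$ is converted into a tree $T_C$ on the same vertex set, possibly with one specified leaf $\ell_C$ and a bridging edge $\{\ell_C,v_C\}\in F$ whose other endpoint lies in a different component, such that the total weight of the internal vertices of $T_C$ plus $w(\ell_C)$ is at least $\tfrac{2}{3}\,w(C\cap M^*)$. Taking the union of the settled trees with the edges of $F$, and then extending (if necessary) by arbitrary edges of $G$ to reach full connectivity, yields a spanning tree $T$. If the bridging edge at $C$ is retained, $\ell_C$ is promoted to an internal vertex of $T$; if it is discarded, then $\ell_C$ remains a leaf and the component falls back to the baseline $w(C\cap M^*)/2$ bound obtained exactly as in Lemma~\ref{lemma02}.

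The main obstacle is to control how many bridging edges must be discarded to keep $T$ acyclic. Viewing each component of $H_3$ as a node and each bridging edge as a directed arc from the owning component to the target component produces a digraph in which every node has out-degree at most one. A tree in this digraph permits every bridging edge to survive, giving the full $\tfrac{2}{3}$ contribution at each member; a directed cycle of length $k\ge 2$ forces exactly one bridging edge to be dropped, so the average contribution among its $k$ components is at least
\[
\tfrac{1}{k}\bigl(\tfrac{1}{2}+(k-1)\cdot\tfrac{2}{3}\bigr) \;=\; \tfrac{4k-1}{6k},
\]
which is minimised at $k=2$ with value $\tfrac{7}{12}$. Hence every component contributes at least $\tfrac{7}{12}\,w(C\cap M^*)$ to $w(T)$, so $w(T)\ge \tfrac{7}{12}\,w(M^*)$, and Corollary~\ref{coro01} then yields $w(T)\ge \tfrac{7}{12}\,w(T^*)$, establishing the claimed ratio (the statement's ``$12/7$'' being its reciprocal form).
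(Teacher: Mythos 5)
Your outline is structurally the same as the paper's---settle each component to get a $\tfrac{2}{3}$ guarantee together with at most one outward bridging edge, then observe that only a cycle in the ``who points to whom'' digraph forces a downgrade to $\tfrac{1}{2}$---but the quantitative step has a genuine gap. You compute the \emph{unweighted} average of the per-component ratios,
$\frac{1}{k}\bigl(\tfrac{1}{2}+(k-1)\tfrac{2}{3}\bigr)$,
and conclude ``every component contributes at least $\tfrac{7}{12}\,w(C\cap M^*)$.'' Neither holds: the component whose arc is dropped contributes only $\tfrac{1}{2}$ of its own $w(C\cap M^*)$, and an unweighted average of ratios is not a lower bound on the ratio of sums unless all the $w(C\cap M^*)$ are equal. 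In a $2$-cycle with $w(C_1)$ huge and $w(C_2)$ tiny, dropping $C_1$'s bridging edge makes the aggregate ratio tend to $\tfrac{1}{2}$, not $\tfrac{7}{12}$. What is missing is a \emph{choice} of which arc on each directed cycle to sacrifice. You must drop the arc belonging to the lightest component $C_1$ of the cycle; then
$\tfrac{1}{2}w(C_1)+\tfrac{2}{3}\sum_{i\ge 2}w(C_i)\ \ge\ \tfrac{7}{12}\sum_{i\ge 1}w(C_i)$
reduces to $\sum_{i\ge 2}w(C_i)\ge w(C_1)$, which holds since $C_1$ is the lightest and $k\ge 2$. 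The paper achieves exactly this implicitly: components are processed in decreasing order of $w(C\cap M^*)$, so whenever a component $C$ must fall back to $\tfrac{1}{2}$, it can be charged to a distinct, already-processed (hence heavier) component $C'$ whose $F$-edge points into $C$, and each $C'$ is charged at most once because it owns at most one $F$-edge; the pairwise bound $\tfrac{1}{2}w(C)+\tfrac{2}{3}w(C')\ge\tfrac{7}{12}(w(C)+w(C'))$ then uses $w(C')\ge w(C)$. Without specifying which arc to drop (or, equivalently, the processing order), the $\tfrac{7}{12}$ bound does not follow from your calculation.

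A secondary point: when a bridging edge is discarded you assert that the same settled tree ``falls back to the baseline $\tfrac{1}{2}$ bound.'' That is not automatic. The surgeries performed to settle $C$ may have demoted a heavy vertex (e.g.\ $a^1$) to the specified leaf, in which case $W_I$ alone need not be $\tfrac{1}{2}w(C\cap M^*)$ once the promotion is lost. The paper avoids this by \emph{replacing} $C$ with a different tree that achieves $\tfrac{1}{2}$ directly (essentially the tree certified by Lemma~\ref{lemma08}, or a cycle minus its lightest $M^*$-edge). You should do the same rather than reuse the settled tree verbatim.
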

\begin{proof}
The above Lemmas~\ref{lemma09}--\ref{lemma13} state that every component of the subgraph $H_3 = G[V, M^* \cup M^{aa} \cup N^{aa} \cup N^{ax} \cup M^{ab} \cup N^{ab}]$
can be settled, without affecting any other components.
Also, such a settling process for a component takes only linear time, by scanning once the edges in the subgraph induced on the set of vertices of the component.
By settling, essentially the component is converted into a tree, possibly with one edge of $F$ specified for connecting a leaf of the tree outwards.

In the next step of the algorithm, it iteratively processes the heaviest component $C$, i.e. with the largest $w(C \cap M^*)$.
If the component $C$ has been associated with an edge $e$ of $F$,
and using the edge $e$ to connect a leaf of the resulting tree for $C$ outwards does not create a cycle,
then the algorithm does this and $C$ is processed.
This guarantees the total weight of the internal vertices in $V(C)$ is at least $2 w(C \cap M^*)/3$.
If using the edge $e$ to connect a leaf of the resulting tree for $C$ outwards would create a cycle,
the algorithm processes $C$ by replacing $C$ with another tree
that guarantees the total weight of the internal vertices in $V(C)$ at least $\frac 12 w(C \cap M^*)$.
Notice that the latter case happens only because of (at least) one edge of $F$ in an earlier iteration where a distinct component $C'$ was processed,
which connects a vertex of $C'$ into a vertex of $C$.
Therefore, every such $C$ is associated with a distinct component $C'$ processed by the algorithm in an earlier iteration, and thus $w(C') \ge w(C)$.
On the other hand, every such component $C'$ is associated to one $C$ only, due to its edge in $F$ connecting a leaf outwards into a vertex of $C$.
It follows that for this pair of components $C$ and $C'$, the total weight of the internal vertices in $V(C) \cup V(C')$ is at least
\[
w(C)/2 + 2 w(C')/3 \ge 7 (w(C) + w(C'))/12.
\]
After all components of $H_3$ are processed, we obtain a forest for which the total weight of the internal vertices therein is at least $7 w(M^*)/12$.
The algorithm lastly uses any other available edges of $E$ to interconnect the forest into a final tree, denoted as $T$;
clearly $w(T) \ge 7 w(M^*)/12$.

The time for the interconnecting purpose is at most $O(m \log n)$.
Therefore, by Corollary~\ref{coro01} we have a $7/12$-approximation algorithm for the MwIST problem on claw-free graphs.
\end{proof}

\section{Concluding remarks}
We have presented an improved approximation algorithm for the vertex weighted MIST problem, denoted MwIST,
which achieves the worst-case performance ratio $1/2$, beating the previous best ratio of $1/(3 + \epsilon)$,
designed by Knauer and Spoerhase in 2009~\cite{KS09}.
The key ingredient in the design and analysis of our algorithm is a novel relationship between MwIST and the maximum weight matching,
which we uncovered and it is inspired by the work \cite{LZ14, LZ14b, CHW16}.
A step further, for the problem restricted to claw-free graphs, we presented a $7/12$-approximation algorithm,
improving the previous best ratio of $1/2$ designed by Salamon in 2009 for claw-free graphs without leaves.
It would be interesting to see whether this newly uncovered relationship, possibly combined with other new ideas,
can be explored further to design better approximation algorithms for MwIST, or special cases of MwIST including claw-free graphs and cubic graphs.

\section*{Acknowledgement}
The authors are grateful to two reviewers for their insightful comments on the COCOON 2017 submission and
for their suggested changes that improve the presentation greatly.

ZZC was supported in part by the Grant-in-Aid for Scientific Research of the Ministry of Education, Culture, Sports, Science and Technology of Japan,
under Grant No. 24500023.
GL was supported by the NSERC Canada and the NSFC Grant No. 61672323;
most of his work was done while visiting ZZC at the Tokyo Denki University at Hatoyama.
LW was supported by the Hong Kong GRF Grants CityU 114012 and CityU 123013.
YC was supported in part by the NSERC Canada, the NSFC Grants No. 11401149, 11571252 and 11571087, and the China Scholarship Council Grant No. 201508330054.


\end{document}